\newcolumntype{P}[1]{>{\centering\arraybackslash}p{#1}}
\newtheorem{theorem}{Theorem}[section]
\newtheorem{lemma}[theorem]{Lemma}
\newtheorem{proposition}[theorem]{Proposition}
\newtheorem{corollary}[theorem]{Corollary}
\theoremstyle{definition}
\newtheorem{remark}[theorem]{Remark}
\newtheorem{remarks}[theorem]{Remarks}
\newtheorem{notation}[theorem]{Notation}
\newtheorem{example}[theorem]{Example}
\newtheorem{definition}[theorem]{Definition}
\DeclareMathOperator{\inid}{in}
\DeclareMathOperator{\Spec}{Spec}
\DeclareMathOperator{\Weil}{Weil}
\DeclareMathOperator{\reg}{reg}
\DeclareMathOperator{\solvdeg}{sd}
\DeclareMathOperator{\maxgb}{max.GB.deg}
\DeclareMathOperator{\ttop}{top}
\DeclareMathOperator{\sat}{sat}
\DeclareMathOperator{\gl}{GL}
\DeclareMathOperator{\id}{Id}
\newcommand{\FF}{\mathbb{F}}
\newcommand{\HS}{\mathrm{HS}}
\newcommand{\HF}{\mathrm{HF}}
\newcommand{\HP}{\mathrm{HP}}
\newcommand{\dregF}[1]{d_{\mathrm{reg}}(#1)}
\def\l@subsection{\@tocline{2}{0pt}{2.5pc}{5pc}{}}
\author{}
\address{}
\email{}
\title{The complexity of solving Weil restriction systems}
\author{Alessio Caminata}
\address{Alessio Caminata, Dipartimento di Matematica, Universit\`a di Genova\\ via Dodecaneso 35, 16146, Genova, Italy}
\email{caminata@dima.unige.it}
\author{Michela Ceria} 
\address{Michela Ceria, Dipartimento di Meccanica, Matematica e Management, Politecnico di Bari, via Orabona 4, 70125, Bari, Italy}
\email{michela.ceria@gmail.com}
\author{Elisa Gorla}
\address{Elisa Gorla, Institut de Math\'ematiques, Universit\'e de Neuch\^atel, Rue Emile-Argand 11, CH-2000
	Neuch\^atel, Switzerland}  
\email{elisa.gorla@unine.ch}
\begin{document}

	\thanks{\textit{Mathematics Subject Classification (2020)}: 13P10, 13P15, 13P25. 
		\\ \indent \textit{Keywords and phrases:} Weil restriction, solving degree, degree of regularity, Gr\"obner basis
		\\  The first author is supported by the Italian PRIN2020, Grant number 2020355B8Y \enquote{Squarefree Gr\"obner degenerations, special varieties and related topics} and by the European Union within the program NextGenerationEU}

	\maketitle
	
	\begin{abstract}
		The solving degree of a system of multivariate polynomial equations provides an upper bound for the complexity of computing the solutions of  the system via Gr\"obner basis methods. In this paper, we consider polynomial systems that are obtained via Weil restriction of scalars. The latter is an arithmetic construction which, given a finite Galois field extension $k\hookrightarrow K$, associates to a system $\mathcal{F}$ defined over $K$ a system $\Weil(\mathcal{F})$ defined over $k$, in such a way that the solutions of $\mathcal{F}$ over $K$ and those of $\Weil(\mathcal{F})$ over $k$ are in natural bijection.
		In this paper, we find upper bounds for the complexity of solving a polynomial system $\Weil(\mathcal{F})$ obtained via Weil restriction in terms of algebraic invariants of the system $\mathcal{F}$.
	\end{abstract}

	\section{Introduction}
	
	The Weil restriction of scalars is a construction which is of interest mostly within arithmetic geometry and number theory. Given a finite Galois field extension $k\hookrightarrow K$, it allows one to associate an object defined over $k$ to one defined over $K$, with the properties that their rational points are in natural bijection. This object can be for example a quasi-projective variety, or an affine or projective scheme. In the case when we start with an affine or projective scheme defined by a system of polynomial equations $\mathcal{F}$ defined over $K$, then the Weil restriction of scalars associates to $\mathcal{F}$ a system $\Weil(\mathcal{F})$ defined over $k$: the defining equations of the Weil restriction of the original scheme. As we already mentioned, the solutions of $\mathcal{F}$ over $K$ and those of $\Weil(\mathcal{F})$ over $k$ are in natural bijection. 
	
	In this situation, the Weil restriction of scalars is of interest also within cryptography.
	This construction has found applications within elliptic and hyperelliptic curve cryptography, where it is used to compute discrete logarithms, see e.g. \cite{Gaudry}. The Discrete Logarithm Problem is a computational problem of central important in public-key cryptography, as several cryptographic primitives rely on its hardness for their security.
	The Weil restriction is also used in multivariate cryptography, one of the current proposals for building post-quantum resistant cryptographic primitives. In this context, the Weil restriction of scalars is useful in order to construct polynomial systems in such a way that their algebraic structure is disguised, so that the designer of the system is the only one who has at their disposal an efficient algorithm to compute its solutions, see e.g.\cite{Patarin}.
	
	The Weil restriction of scalars is also used, although never explicitly mentioned, in coding theory. For example, one can regard spread codes as the rational points of the Weil restriction of scalars of a Grassmannian of lines with respect to an extension of finite fields, see \cite{MGR}. The algebraic structure of spread codes makes their decoding particularly efficient, see \cite{GMR}.
	
	We are interested in estimating the complexity of solving a system of polynomial equations obtained via Weil restriction of scalars by using Gr\"obner basis methods. It is well-known that computing the reduced lexicographic Gr\"obner basis of a system of polynomial equations allows one to compute the solutions of the system, assuming that they are finitely many and that one can efficiently compute the roots of univariate polynomials. This is the case, e.g., over finite fields. Currently, some of the most efficient families of algorithms for computing a Gr\"obner basis are those based on linear algebra, including \cite{F4,XL,F5}. Their complexity is bounded from above by a known function of an invariant of the system, called the solving degree. In this paper we give upper bounds for the solving degree of the Weil restriction of a system of polynomial equations in terms of algebraic invariants of the original system. This gives an upper bound on the complexity of solving the Weil restriction system. 
	
	The paper is organized as follows. In Section \ref{section preliminaries} we define the solving degree of a polynomial system and recall the definition and some facts on the Weil restriction of scalars. The main result of Section \ref{commalg} is Theorem \ref{thm: Weilrestrictionproperties}, where we compute some algebraic invariants of $\Weil(\mathcal{F})$ in terms of those of $\mathcal{F}$. As a consequence, in Corollary \ref{cor: solvingdegreeweilrestriction} we derive an upper bound for the solving degree of a homogeneous Weil restriction system.
	In Section \ref{inhomog} we derive the desired upper bounds for the solving degree of a (not necessarily homogeneous) Weil restriction system. In particular, in Theorem \ref{cor:solvdegWRinhomog} we give an upper bound for the solving degree of $\Weil(\mathcal{F})$ in terms of the Castelnuovo-Mumford regularity of the system obtained from $\mathcal{F}$ by homogenizing its equations. In Corollary \ref{cor:wfieldeqns} we do the same for a system to which we have added the field equations and in Proposition \ref{prop:dreg_inhomog} we relate the degree of regularity of a Weil restriction system to that of the original system.

	\section{Preliminaries}\label{section preliminaries}
	
	In this section we present the definitions and preliminary results that we rely on in the rest of the paper. In \S~\ref{subsection sd and dreg}, we briefly recall the definitions of solving degree and degree of regularity. We limit ourselves to the basic notions necessary to define these two invariants and we refer to \cite{CG21} for a more detailed exposition. In \S~\ref{subsection Weil restriction}, we recall the construction of Weil restriction and we present an algebraic proof of Weil's Theorem in the affine case.
	
	\subsection{Solving degree and degree of regularity}\label{subsection sd and dreg}
	Let $K$ be a field and let $R=K[x_1,\dots,x_m]$ be a polynomial ring in $m$ variables over $K$, equipped with the degree reverse lexicographic term order. 
	We consider a (not necessarily homogeneous) polynomial system $\mathcal{F}=\{f_1,\dots,f_r\}$ in $R$ . The \emph{linear algebra based algorithms} for solving the system $\mathcal{F}$
	transform the problem of computing a Gr\"obner basis of the ideal generated by $\mathcal{F}$ into one or more instances of Gaussian elimination of Macaulay matrices. These are constructed as follows.
	For any degree $d\in\mathbb{Z}_+$ the \emph{Macaulay matrix} $M_{\leq d}$ of $\mathcal{F}$ has columns indexed by the terms of $R$ of degree $\leq d$, sorted in decreasing order from left to right. The rows of $M_{\leq d}$ are indexed by the polynomials $m_{i,j}f_j$, where $m_{i,j}$ is a term in $R$ such that $\deg(m_{i,j}f_j)\leq d$. The entry $(i,j)$ of $M_{\leq d}$ is the coefficient of the term of column $j$ in the polynomial corresponding to the $i$-th row.
	
	The size of the Macaulay matrices $M_{\leq d}$, hence the computational complexity of computing their reduced row echelon forms, depends on the degree $d$. 
	Therefore, it is important to estimate the largest $d$ of the Macaulay matrices involved in the computation of the Gr\"obner basis. For this reason, Ding and Schmidt \cite{DS13} introduced the concept of solving degree.
	
	\begin{definition}[Solving degree]\label{def-solvingdegree}
		Let $\mathcal{F}$ be a polynomial system in $R$. The \emph{solving degree} of $\mathcal{F}$ (with respect to the degree reverse lexicographic term order) is the least degree $d$ such that Gaussian elimination on the Macaulay matrix $M_{\leq d}$ produces a Gr\"obner basis of $\mathcal{F}$. We denote it by $\solvdeg(\mathcal{F})$. 
	\end{definition}
	
	\begin{remarks} 1) One can define and consider the solving degree with respect to any term order.
		However, it turns out that in practice computations with respect to the degree reverse lexicographic term order are often faster than with respect to any other term order. For this reason, in this paper we will only consider the solving degree with respect to the degree reverse lexicographic term order.
		\par 2) Some variants of the algorithms perform Gaussian elimination on $M_{\leq d}$ and then add to the Macaulay matrix $M_{\leq d}$  the rows corresponding to  polynomials $h\cdot f$, where $h$ is a term and $f$ is a polynomial such that $\deg(f) < d$ and the leading term of $f$ was not the leading term of any row of $M_{\leq d}$ before performing Gaussian elimination. Throughout the paper, we consider the situation when \emph{no extra rows are inserted}. Notice that the solving degree is still an upper bound on the degree in which the algorithms adopting this variation terminate. See also \cite[Remark 6]{CG21} for a more detailed discussion.
	\end{remarks}
	
	\par The definition of solving degree has an algorithmic nature and it is usually difficult to estimate the solving degree of a polynomial system without solving it. 
	So many authors use the \emph{degree of regularity} introduced by Bardet, Faug\`ere, and Salvy \cite{Bar04, BFS04} as a proxy for the solving degree.
	
	\par Let $I$ be a homogeneous ideal of $R$, and let $A=R/I$.
	For an integer $d\geq 0$, we denote by $A_d$ the homogeneous part of degree $d$ of $A$.
	The function $\HF_A(-):\mathbb{N}\rightarrow\mathbb{N}$, $\HF_A(d)=\dim_{k}A_d$ is called \emph{Hilbert function} of $A$.
	The \emph{Hilbert series} of $A$ is defined as $\HS_A(z)=\sum_{j\in\mathbb{N}}\HF_A(j)z^j$.
	It is well known that for large $d$, the Hilbert function of $A$ is a polynomial in $d$ called \emph{Hilbert polynomial} and denoted by $\HP_A(d)$.

	\begin{definition}[Degree of regularity]\label{def-dregFaugere}
		Let $\mathcal{F}=\{f_1,\dots,f_r\}\subseteq R$ be a system of equations and let $\left(\mathcal{F}^{\mathrm{top}}\right)=(f_1^{\mathrm{top}},\dots,f_r^{\mathrm{top}})$ be the ideal of $R$ generated by the homogeneous part of highest degree of $\mathcal{F}$. Assume that  $\left(\mathcal{F}^{\mathrm{top}}\right)_d=R_d$ for $d\gg0$.
		The \emph{degree of regularity} of $\mathcal{F}$ is
		\begin{equation*}
			\dregF{\mathcal{F}}=\min\{d\geq 0\mid (\mathcal{F}^{\mathrm{top}})_d=R_d\}=\min\{d\geq 0\mid \HF_{R/(\mathcal{F}^{\mathrm{top}})}(d)=0\}.
		\end{equation*} 
	\end{definition}
	
	For examples and a discussion on the relation between the solving degree and the degree of regularity of a polynomial system we refer the reader to \cite[\S4.1]{CG21}, \cite[\S4.1]{BDDGMT20}, \cite[Corollary 3.67]{T}, \cite[Theorem 2.1]{ST}, and \cite{CG23}.
	
	In this paper, we use results from \cite{CG21} to produce upper bounds for the solving degree. Some of these results require the assumption that the system is in generic coordinates according to \cite[Definition 1.5]{BS87}. We recall the definition for the convenience of the reader. See also \cite[Remark~5]{CG22} for a discussion about the generic coordinates assumption over finite fields.
	
	\begin{definition}\label{defn:gencoords}
		Let $I\subseteq R$ be a homogeneous ideal with $\dim(R/I)=d\geq 0$. We say that $I$ is \emph{in generic coordinates over $\overline{K}$} if $x_i$ is non-zerodivisor $\textrm{mod}\ (I+(x_m,\ldots,x_{i+1}))^{\sat}$ for all $i=m,m-1,\ldots,m-d+1$, where $(I+(x_m,\ldots,x_{i+1}))^{\sat}$ is the saturation of $I+(x_m,\ldots,x_{i+1})$ with respect to the irrelevant maximal ideal of $R$.
	\end{definition}
	
	\begin{remark}\label{remark:gencoords}
		Let $J\subseteq R$ be a homogeneous ideal with $\dim(R/J)=d\geq 0$. Following the terminology of \cite{BS87}, denote by $U_d(J)$ the set of $d$-tuples of homogeneous linear forms $(h_m,\ldots,h_{m-d+1})\in R_1^d$ such that $h_i$ is non-zerodivisor $\textrm{mod}\ (J+(h_m,\ldots,h_{i+1}))^{\sat}$ for  $i=m,m-1,\ldots,m-d+1$.
		The terminology that we use in Definition \ref{defn:gencoords} is motivated by the observation that, for any given $J$, there is an open set of coordinate changes $$U(J)=\{g\in\gl_n(K)\mid (x_m,\ldots,x_{m-d+1})\in U_d(gJ)\}.$$
		Notice that, depending on $J$ and $K$, $U(J)$ may be the empty set.
		Definition \ref{defn:gencoords} states that $I\subseteq R$ is in generic coordinates over $\overline{K}$ if and only if $\id\in U(I\otimes_K\overline{K})$. In particular, $U(I\otimes_K\overline{K})\subseteq\gl_n(\overline{K})$ is a dense open subset of $\gl_n(\overline{K})$.
	\end{remark}

	\subsection{Weil restriction}\label{subsection Weil restriction}
	Let $k\hookrightarrow K$ be a finite Galois field extension of degree $n$ with Galois group $G=\mathrm{Gal}(K/k)$ and let $\{\alpha_1,\dots,\alpha_n\}$ be a $k$-vector space basis  of $K$.
	We consider two polynomial rings: the polynomial ring $R=K[x_1,\dots,x_m]$ in $m$ variables over $K$ and the polynomial ring  $S =k[x_{i,j}]_{i=1,\dots,m, j =1,\dots,n}$   in $nm$ variables over $k$. We define a $K$-algebra homomorphism $\psi:R\rightarrow S\otimes_{k}K$ via
	\begin{equation}\label{eq:changevariables}
		\psi(x_i) = x_{i,1}\alpha_1+\cdots +x_{i,n}\alpha_n \ \ \forall i=1,\dots,m.
	\end{equation}
	
	\begin{definition}\label{def:Weilrestriction}
		Let $f\in R$ be a polynomial. The {\emph {Weil restriction}} of $f$ is the set of polynomials $\Weil(f)=\{f_1,\dots,f_n\}\subseteq S$ defined by
		\[
		f_1\alpha_1+\cdots+f_n\alpha_n=f(\psi(x_1),\dots,\psi(x_n)) \in S\otimes_{k}K,
		\]
		where the right hand side is the image of $f$ under the map $\psi:R\rightarrow S\otimes_{k}K$ defined in \eqref{eq:changevariables}.
		If $\mathcal{F}=\{g_1,\dots,g_r\}\subseteq R$ is a polynomial system, then its Weil restriction is the system $\Weil(\mathcal{F})=\Weil(g_1)\cup\dots\cup\Weil(g_r)\subseteq S$. 
		If $I=(g_1,\dots,g_r)$ is an ideal of $R$, the Weil restriction of $I$ is the ideal $\mathrm{Weil}(I)$ of $S$ generated by the Weil restrictions of $g_1,\dots,g_r$.
		If $V=\Spec(R/I)$ is an affine scheme over $K$, then the Weil restriction of $V$ is the affine scheme $\mathrm{Weil}(V)=\Spec\left( S/\mathrm{Weil}(I)\right)$ over $k$.
	\end{definition}
	
	The definition of Weil restriction of an ideal and of an affine scheme does not depend on the choice of the generators. In fact, the Weil restriction has a more general functorial interpretation.
	
	\begin{remark}
		Given a quasi-projective scheme $V$ over $K$, one can consider the following contravariant functor
		\[
		\begin{split}
			\mathcal{R}_{K\mid k}(V):\mathrm{Sch}/k&\rightarrow\mathrm{Sets}\\
			T&\mapsto\mathrm{Hom}_{K}(T\times_kK,V).
		\end{split}
		\]
		This functor is representable, i.e., there exists a unique scheme $W$ over $k$ such that $\mathcal{R}_{K\mid k}(V)\cong\mathrm{Hom}_k(-,W)$. The scheme $W$ is called the Weil restriction of $V$ with respect to the extension $k\subseteq K$.
		If $V=\Spec(K[x_1,\dots,x_n]/I)$ is an affine scheme, then its Weil restriction $W$ as defined here coincides with the affine scheme $\mathrm{Weil}(V)$ as defined in Definition~\ref{def:Weilrestriction} (see \cite[\S3, Proposition~2]{Nau99}).
	\end{remark}
	
	\par A well-known result by Weil \cite{W82} states that for any  quasi-projective scheme $V$ over $K$ with Weil restriction $W=\Weil(V)$ there is an isomorphism
	\[
	W\times_kK\cong\prod_{\sigma\in G}V^{\sigma},
	\]
	where $V^{\sigma}$ denotes the conjugate of $V$ via $\sigma\in G$.
	We give an algebraic proof of this fact in the affine case. 
	The methods introduced in this proof will be useful throughout the paper.
	
	\begin{theorem}[Weil]\label{thm:weil}
		Let $k\hookrightarrow K$ be a finite Galois field extension of degree $n$ with Galois group $G$, and let $I\subseteq K[x_1,\dots,x_m]$ be an ideal. Let $S=k[x_{i,j}]_{i=1,\dots,m, j =1,\dots,n}$. Then, there is a $K$-algebra isomorphism
		\[
		\Psi:\bigotimes_{\sigma\in G}\left(K[x_1,\dots,x_m]/I\right)^\sigma\longrightarrow (S/\Weil(I))\otimes_k K.
		\]
	\end{theorem}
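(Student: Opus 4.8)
The plan is to write down an explicit $K$-algebra isomorphism between the two polynomial rings underlying the statement and to check that it matches up the relevant ideals. First I would dispose of the quotients: since $k\hookrightarrow K$ is flat, $(S/\Weil(I))\otimes_k K\cong(S\otimes_k K)/\Weil(I)(S\otimes_k K)$; and since tensor products over the field $K$ are exact, $\bigotimes_{\sigma\in G}(K[x_1,\dots,x_m]/I)^\sigma$ is the quotient of $\bigotimes_{\sigma\in G}(K[x_1,\dots,x_m])^\sigma$ by the ideal $J$ generated by the conjugate ideals $I^\sigma$. Both $S\otimes_k K$ and $\bigotimes_{\sigma\in G}(K[x_1,\dots,x_m])^\sigma$ are polynomial rings over $K$ in $nm$ variables. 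So it suffices to construct a $K$-algebra isomorphism $\Psi'\colon\bigotimes_{\sigma\in G}(K[x_1,\dots,x_m])^\sigma\longrightarrow S\otimes_k K$ with $\Psi'(J)=\Weil(I)(S\otimes_k K)$, and then pass to quotients.

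To build $\Psi'$, for each $\sigma\in G$ I would introduce the ring homomorphism $\psi_\sigma\colon K[x_1,\dots,x_m]\to S\otimes_k K$ determined by $\psi_\sigma(a)=1\otimes\sigma(a)$ for $a\in K$ and $\psi_\sigma(x_i)=\sum_{j=1}^n x_{i,j}\otimes\sigma(\alpha_j)$ for $i=1,\dots,m$, so that $\psi_{\id}=\psi$. Because the scalars are twisted by $\sigma$, the map $\psi_\sigma$ is a $K$-algebra homomorphism out of the conjugate ring $(K[x_1,\dots,x_m])^\sigma$. By the universal property of the tensor product of commutative $K$-algebras (the coproduct), the family $(\psi_\sigma)_{\sigma\in G}$ induces a $K$-algebra homomorphism $\Psi'$ as above, which on a pure tensor is $\bigotimes_\sigma r_\sigma\mapsto\prod_\sigma\psi_\sigma(r_\sigma)$. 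Writing $x_i^{(\sigma)}$ for the image of $x_i$ in the $\sigma$-th tensor factor, $\Psi'$ sends the $K$-basis $\{x_i^{(\sigma)}\}$ of the degree-one part to $\{\sum_j\sigma(\alpha_j)\,x_{i,j}\}$. Because $\{\alpha_1,\dots,\alpha_n\}$ is a $k$-basis of the Galois, hence separable, extension $K/k$, the matrix $(\sigma(\alpha_j))_{\sigma\in G,\,1\le j\le n}$ lies in $\gl_n(K)$ — equivalently, under $K\otimes_k K\cong\prod_{\sigma\in G}K$ the $K$-basis $\{1\otimes\alpha_j\}$ is sent to a $K$-basis — so for each fixed $i$ the elements $\{\sum_j\sigma(\alpha_j)\,x_{i,j}:\sigma\in G\}$ form a $K$-basis of $\bigoplus_j Kx_{i,j}$, and altogether $\Psi'$ is an invertible change of variables, i.e. an isomorphism of $K$-algebras.

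It remains to compute $\Psi'(J)$. The key point is the identity: if $f\in K[x_1,\dots,x_m]$ and $\Weil(f)=\{f_1,\dots,f_n\}$, so that $\psi(f)=\sum_{l=1}^n f_l\otimes\alpha_l$ with $f_l\in S$, then
\[
\psi_\sigma(f)=(\mathrm{id}_S\otimes\sigma)(\psi(f))=\sum_{l=1}^n f_l\otimes\sigma(\alpha_l),
\]
which follows by expanding $\psi_\sigma(f)$ and using that the $f_l$ lie in $S$, hence are fixed coefficientwise by $G$. (Here $f$, viewed inside $(K[x_1,\dots,x_m])^\sigma$, is a generator of $I^\sigma$.) Fixing a finite set of generators of $I$, it follows that $\Psi'$ carries $J$ onto the ideal generated by $\{\sum_l f_l\otimes\sigma(\alpha_l):f\text{ a generator},\ \sigma\in G\}$. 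For each fixed $f$, invertibility of $(\sigma(\alpha_l))_{\sigma,l}$ shows that the $K$-span of $\{\sum_l f_l\otimes\sigma(\alpha_l):\sigma\in G\}$ inside $S\otimes_k K$ equals that of $\{f_l\otimes 1:1\le l\le n\}$, so the two families generate the same ideal, namely $\Weil(f)(S\otimes_k K)$. Summing over the chosen generators gives $\Psi'(J)=\Weil(I)(S\otimes_k K)$, and passing to quotients yields the desired isomorphism $\Psi$.

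The reductions using flatness and exactness, and the nonvanishing of the discriminant of $\{\alpha_1,\dots,\alpha_n\}$ (equivalently, Dedekind's lemma on independence of characters), are standard and quick. I expect the two places that need genuine care to be: fixing the convention for the conjugate ring $(K[x_1,\dots,x_m])^\sigma$ and for the conjugate ideal $I^\sigma$ so that $\psi_\sigma$ is really $K$-linear on it and the identity for $\psi_\sigma(f)$ holds with the scalars twisted by $\sigma$ rather than $\sigma^{-1}$; and the final bookkeeping, namely that the invertible $K$-linear change between the generating sets $\{\sum_l f_l\otimes\sigma(\alpha_l)\}_{\sigma}$ and $\{f_l\otimes 1\}_{l}$ does not alter the ideal they generate in $S\otimes_k K$.
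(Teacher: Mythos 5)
Your proposal is correct and follows essentially the same route as the paper: the same change of variables $x_i^{(\sigma)}\mapsto\sum_j\sigma(\alpha_j)x_{i,j}$, invertibility via the nondegeneracy of the matrix $\bigl(\sigma(\alpha_j)\bigr)_{\sigma,j}$, and the same key identity $\psi_\sigma(f)=\sum_l\sigma(\alpha_l)f_l$ obtained by applying $\sigma$ to the defining relation of $\Weil(f)$. The only differences are presentational (you make the flatness/exactness reduction and the coproduct universal property explicit, and you treat all generators at once rather than reducing to the principal case), so no further comparison is needed.
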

	
	\begin{proof}
		First of all, since the tensor product of polynomial rings is again a polynomial ring, we can fix an isomorphism
		\[
		\bigotimes_{\sigma\in G}K[x_1,\dots,x_m]^\sigma\cong K[x_{i,\sigma}]_{ i=1,\dots,m, \ \sigma\in G},
		\]
		where the variables $x_{i,\sigma}$ keep track of the action of the Galois group as follows: $\tau\circ x_{i,\sigma}=x_{i,\tau\sigma}$. Similarly, given a polynomial $F$ with coefficients in $K$, we define an action of $\sigma\in G$ on $F$ by $\sigma(F)=F^\sigma$, where the polynomial $F^\sigma$ is the same as $F$, but with all coefficients changed by the action of $\sigma$. Notice that this can be applied to polynomials in $K[x_1,\dots,x_m]$ and $K[x_{i,j}]_{i=1,\dots,m, j =1,\dots,n}$.
		Under the isomorphism above, the product over $G$ of the ideal $I=(F_1,\dots,F_r)$ corresponds to an ideal generated by polynomials $f_{\sigma,t}=F_t^\sigma(x_{1,\sigma},\dots,x_{m,\sigma})$ for $\sigma\in G$ and $t=1,\dots,r$.
		\par We define the $K$-algebra isomorphism 
		\[
		\begin{split}
			\Psi:K[x_{i,\sigma}]_{ i=1,\dots,m, \ \sigma\in G}&\longrightarrow 
			K[x_{i,j}]_{i=1,\dots,m, j =1,\dots,n}\\
			x_{i,\sigma}&\mapsto \sigma(\alpha_1)x_{i,1}+\cdots+\sigma(\alpha_n)x_{i,n}.
		\end{split}
		\] 
		The map $\Psi$ is $K$-linear and its invertibility follows from \cite[Proposition~7.6.6]{Coh03}, since the associated matrix is block diagonal, with each diagonal block of size $n\times n$ and of the form:
		\begin{equation*}
			\begin{pmatrix}
				\alpha_{1} & \alpha_{2} & \cdots & \alpha_{n} \\
				\sigma_1(\alpha_{1}) & \sigma_1(\alpha_{2}) & \cdots & \sigma_1(\alpha_{n}) \\
				\vdots  & \vdots  & \ddots & \vdots  \\
				\sigma_{n-1}(\alpha_{1}) & \sigma_{n-1}(\alpha_{2}) & \cdots & \sigma_{n-1}(\alpha_{n}) \\
			\end{pmatrix}^T,
		\end{equation*}
		where $G=\{\sigma_0=\mathrm{Id},\sigma_1,\dots,\sigma_{n-1}\}$.
		\par We assume for simplicity that $I=(F)$ is principal, with corresponding Weil restriction $\Weil(I)=(f_1,\dots,f_n)$.
		To conclude the proof we have to show that  $\Psi((f_{\sigma_0},...,f_{\sigma_{n-1}}))=(f_1,...,f_n)$, where the equality is intended as ideals.
		\\
		We know that $f_\sigma=F^\sigma(x_{1,\sigma},...,x_{m,\sigma})$. On the other hand, 
		$$F\left(\sum_{j=1}^n \alpha_j x_{1,j}, ..., \sum_{j=1}^n \alpha_j x_{m,j} \right)=\alpha_1 f_1(x_{i,j})+...+ \alpha_n f_n(x_{i,j})$$
		by definition of Weil restriction.
		By letting $\sigma\in G$ act on both sides of the previous equality we obtain
		\[F^\sigma\left(\sum_{j=1}^n \sigma(\alpha_j) x_{1,j}, ..., \sum_{j=1}^n \sigma(\alpha_j) x_{m,j} \right)=  \sigma(\alpha_1) f_1(x_{i,j})+...+ \sigma( \alpha_n) f_n(x_{i,j}).
		\]
		On the other hand,
		$F^\sigma(\sum_{j=1}^n \sigma(\alpha_j) x_{1,j}, ..., \sum_{j=1}^n \sigma(\alpha_j) x_{m,j} )= \Psi(f_\sigma)$.
		This proves that 
		$ \Psi(f_\sigma)= \sigma(\alpha_1) f_1(x_{i,j})+...+ \sigma( \alpha_n) f_n(x_{i,j}) $, which yields $\Psi((f_{\sigma_0},...,f_{\sigma_{n-1}}))=(f_1,...,f_n)$.
		
		If $I$ is not principal, then it suffices to fix a system of generators for $I$ and repeat the same reasoning for each generator.
	\end{proof}

	\section{A commutative algebra approach to Weil restriction}\label{commalg}
	
	In this section, we study some commutative algebra properties of the Weil restriction of an ideal in a polynomial ring. We begin with two preliminary lemmas.
	
	\begin{lemma}\label{lemma:resolutiontensorproduct}
		Let $A=k[x_1,\dots,x_m]$ and $B=k[y_1,\dots,y_t]$ be two polynomial rings over a field $k$, and let $I\subseteq A$ and $J\subseteq B$ be two homogeneous ideals with corresponding minimal graded free resolutions $\mathbb{F}_\bullet\rightarrow A/I\rightarrow0$ and $\mathbb{G}_\bullet\rightarrow B/J\rightarrow0$. Then the product complex $(\mathbb{F}\otimes \mathbb{G})_\bullet$ is a minimal graded free resolution of $A/I\otimes_k B/J\cong k[x_1,\dots,x_m,y_1,\dots,y_t]/(I+J)$.
	\end{lemma}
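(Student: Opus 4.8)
The plan is to prove the statement via the Künneth formula. Set $C=k[x_1,\dots,x_m,y_1,\dots,y_t]$ and recall the canonical identification $A\otimes_k B\cong C$ of graded $k$-algebras, where $C$ carries the total grading. Under this identification one has $A/I\otimes_k B/J\cong C/(IC+JC)=C/(I+J)$, because $k$ is a field (so $-\otimes_k-$ is exact) and the submodule $I\otimes_k B+A\otimes_k J$ of $A\otimes_k B$ maps onto $IC+JC$. The first, routine, step is to observe that $(\mathbb{F}\otimes\mathbb{G})_\bullet$, with $(\mathbb{F}\otimes\mathbb{G})_n=\bigoplus_{i+j=n}\mathbb{F}_i\otimes_k\mathbb{G}_j$ and differential $\partial(\xi\otimes\eta)=\partial_{\mathbb{F}}(\xi)\otimes\eta+(-1)^{|\xi|}\xi\otimes\partial_{\mathbb{G}}(\eta)$, is a complex of finitely generated graded free $C$-modules: if $\mathbb{F}_i=\bigoplus_a A(-d_{ia})$ and $\mathbb{G}_j=\bigoplus_b B(-e_{jb})$, then $\mathbb{F}_i\otimes_k\mathbb{G}_j\cong\bigoplus_{a,b}A(-d_{ia})\otimes_k B(-e_{jb})\cong\bigoplus_{a,b}C(-d_{ia}-e_{jb})$, the last isomorphism being a direct degree count that uses the total grading on $C$.

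For exactness I would invoke Künneth: since $k$ is a field, every $k$-module is flat, so $H_n\big((\mathbb{F}\otimes\mathbb{G})_\bullet\big)\cong\bigoplus_{i+j=n}H_i(\mathbb{F}_\bullet)\otimes_k H_j(\mathbb{G}_\bullet)$ with no correction terms. As $\mathbb{F}_\bullet$ and $\mathbb{G}_\bullet$ are resolutions, $H_i(\mathbb{F}_\bullet)=H_j(\mathbb{G}_\bullet)=0$ for $i,j>0$, while $H_0(\mathbb{F}_\bullet)=A/I$ and $H_0(\mathbb{G}_\bullet)=B/J$. Hence $H_n=0$ for $n>0$ and $H_0\cong A/I\otimes_k B/J\cong C/(I+J)$, so $(\mathbb{F}\otimes\mathbb{G})_\bullet$ is a graded free resolution of $C/(I+J)$.

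It remains to check minimality, which means that every differential matrix of the complex has all its entries in the irrelevant maximal ideal $\mathfrak{m}_C=(x_1,\dots,x_m,y_1,\dots,y_t)$. By minimality of $\mathbb{F}_\bullet$ and $\mathbb{G}_\bullet$, the entries of $\partial_{\mathbb{F}}$ lie in $(x_1,\dots,x_m)A$ and those of $\partial_{\mathbb{G}}$ in $(y_1,\dots,y_t)B$, and both of these ideals sit inside $\mathfrak{m}_C$ under $A\otimes_k B\cong C$; from the Leibniz formula for $\partial$ it then follows at once that the differentials of $(\mathbb{F}\otimes\mathbb{G})_\bullet$ have entries in $\mathfrak{m}_C$. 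I do not anticipate a real obstacle here: the only delicate points are the graded bookkeeping in the first step and checking that Künneth is applied in its homogeneous (degree-zero) form, both of which are standard. An alternative that sidesteps Künneth is to tensor $\mathbb{F}_\bullet\to A/I$ over $k$ with $B$ first, obtaining a minimal graded free resolution of $C/IC$ over $C$ by $k$-flatness of $B$, and then base-change $\mathbb{G}_\bullet$ along $C\to C/IC$; but this merely runs the same flatness argument twice, so the Künneth packaging is preferable.
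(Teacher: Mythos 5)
Your proposal is correct and follows essentially the same route as the paper's proof: the tensor product complex with the Leibniz differential, the K\"unneth formula over a field to get exactness and identify $H_0$ with $A/I\otimes_k B/J$, and a direct check that minimality of the two factors forces the entries of $\partial$ into the irrelevant maximal ideal. Your extra remarks on the graded bookkeeping and the alternative two-step base-change argument are fine but not needed.
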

	
	\begin{proof}
		If we denote by $\mathbb{F}_\bullet=(F_i, d_{i})$ and $\mathbb{G}_\bullet=(G_j, \delta_{j})$ the two complexes, then the tensor product complex is given by  $(\mathbb{F}\otimes\mathbb{G})_h=\displaystyle\bigoplus_{i+j=h}(F_i\otimes_k G_j)$ with differential maps $\partial_h(f\otimes g)=d_i(f)\otimes g + (-1)^if\otimes \delta_j(g) $ for $f\in F_i$, $g\in G_j$.
		From this, we see immediately that if $\mathbb{F}_\bullet$ and $\mathbb{G}_\bullet$ are minimal, i.e. $d_i(F_i)\subseteq (x_1,\dots,x_m)F_{i-1}$ and $\delta_j(G_j)\subseteq (y_1,\dots,y_t)G_{j-1}$, then $(\mathbb{F}\otimes \mathbb{G})_\bullet$ is also minimal. Then from the K\"unneth formula \cite[Corollary~10.84]{Rot09} we obtain $H_0(\mathbb{F}\otimes \mathbb{G})=H_0(\mathbb{F})\otimes_k H_0(\mathbb{G})=A/I\otimes_k B/J$ and $H_n(\mathbb{F}\otimes \mathbb{G})=\displaystyle\bigoplus_{i+j=n}(H_i(\mathbb{F})\otimes_k H_j(\mathbb{G}))=0$ for any $n>0$.
		So $(\mathbb{F}\otimes \mathbb{G})_\bullet$ is exact and resolves $A/I\otimes_k B/J$.
	\end{proof}
	
	\begin{lemma}\label{lemma HSproduct}
		Let $A$ and $B$ be two finitely generated $\mathbb{N}$-graded algebras over a field $k$.
		Then 
		\[\HS_{A\otimes_kB}(z)=\HS_A(z)\cdot\HS_B(z)\]
	\end{lemma}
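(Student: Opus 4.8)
The plan is to reduce the identity to the elementary behavior of dimensions under tensor products of vector spaces, and then recognize the resulting numerical identity as the Cauchy product of the two Hilbert series. First I would recall that the tensor product $A\otimes_k B$ carries a natural $\mathbb{N}$-grading induced by those of $A$ and $B$, whose homogeneous component of degree $d$ is
\[
(A\otimes_k B)_d=\bigoplus_{i+j=d}A_i\otimes_k B_j .
\]
Since $A$ and $B$ are finitely generated $\mathbb{N}$-graded $k$-algebras, each $A_i$ and each $B_j$ is a finite-dimensional $k$-vector space, hence so is each $(A\otimes_k B)_d$; in particular $A\otimes_k B$ is again a finitely generated $\mathbb{N}$-graded $k$-algebra and its Hilbert series is a well-defined formal power series.

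Next I would compute the Hilbert function of $A\otimes_k B$ degree by degree. Using the direct sum decomposition above together with the fact that $\dim_k(V\otimes_k W)=(\dim_k V)(\dim_k W)$ for finite-dimensional $k$-vector spaces $V,W$, one gets
\[
\HF_{A\otimes_k B}(d)=\sum_{i+j=d}\dim_k(A_i\otimes_k B_j)=\sum_{i=0}^{d}\HF_A(i)\,\HF_B(d-i).
\]

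Finally I would multiply by $z^d$, sum over $d\in\mathbb{N}$, and observe that the right-hand side above is precisely the coefficient of $z^d$ in the product $\left(\sum_{i\in\mathbb{N}}\HF_A(i)z^i\right)\left(\sum_{j\in\mathbb{N}}\HF_B(j)z^j\right)$; that is, the Cauchy product of power series. This yields $\HS_{A\otimes_k B}(z)=\HS_A(z)\cdot\HS_B(z)$, as desired. The only point that deserves a line of justification is the identification of the graded pieces of $A\otimes_k B$ (and the finite-dimensionality of each of them); everything after that is a formal manipulation of power series, so I do not anticipate any real obstacle.
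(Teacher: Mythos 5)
Your proposal is correct and follows essentially the same route as the paper's proof: both use the decomposition $(A\otimes_k B)_d=\bigoplus_{i+j=d}A_i\otimes_k B_j$, the multiplicativity of $\dim_k$ under tensor products, and the Cauchy product of the two series. Your explicit remark on the finite-dimensionality of each graded piece is a harmless addition that the paper leaves implicit.
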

	
	\begin{proof}
		We observe that $(A\otimes_kB)_h=\bigoplus_{i+j=h}A_i\otimes_k B_j$.
		Therefore we have
		\[
		\begin{split}
			\HS_A(z)\cdot\HS_B(z)&=\sum_h\sum_{i+j=h}\left(\dim_k A_i\cdot\dim_kB_j\right)z^h\\
			&=\sum_h\left(\sum_{i+j=h}\dim_k (A_i\otimes_k B_j)\right)z^h\\
			&=\sum_h\dim_k\left(\bigoplus_{i+j=h}(A_i\otimes_k B_j)\right)z^h\\
			&=\sum_h\dim_k (A\otimes_kB)_h\, z^h\\
			&=\HS_{A\otimes_kB}(z).
		\end{split}
		\]
	\end{proof}

	\begin{theorem}\label{thm: Weilrestrictionproperties}
		Let $k\hookrightarrow K$ be a finite Galois field extension of degree $n$. Let $I\subseteq R=K[x_1,\dots,x_m]$ be a homogeneous ideal, and let $\Weil(I)\subseteq S =k[x_{i,j}]_{i=1,\dots,m, j =1,\dots,n}$ be its Weil restriction. Then
		\begin{enumerate}[(i)]
			\item $\dim(S/\Weil(I))=n\cdot\dim (R/I)$.
			\item $\mathrm{proj.dim}\left(S/\Weil(I)\right)=n\cdot \mathrm{proj.dim}(R/I)$.
			\item If $R/I$ is Cohen-Macaulay, then $S/\Weil(I)$ is Cohen-Macaulay.
			\item If $R/I$ is a complete intersection, then $S/\Weil(I)$ is a complete intersection.
			\item $\reg\left(\Weil(I)\right)=n\cdot \reg(I)-n+1$.
			\item $\HS_{S/\Weil(I)}(z)=\left(\HS_{R/I}(z)\right)^n$.
			\item $e\left(S/\Weil(I)\right)=e\left(R/I\right)^n$, where $e(-)$ denotes the Hilbert-Samuel multiplicity.
		\end{enumerate}	
	\end{theorem}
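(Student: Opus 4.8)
The plan is to combine Theorem~\ref{thm:weil} with faithfully flat descent along the extension $k\hookrightarrow K$, so that each assertion is reduced to a statement about the $n$-fold tensor product $\bigotimes_{\sigma\in G}(R/I)^\sigma$, to which Lemmas~\ref{lemma:resolutiontensorproduct} and~\ref{lemma HSproduct} apply. Two bookkeeping observations will be used throughout. First, $S\to S\otimes_k K=K[x_{i,j}]$ is faithfully flat, so base change along it preserves Krull dimension, Hilbert series, and the minimal number of generators of a homogeneous ideal, and—since tensoring a minimal graded free resolution with $K$ again yields a minimal graded free resolution (the entries of the differentials stay in the homogeneous maximal ideal)—it also preserves all graded Betti numbers, hence projective dimension and Castelnuovo--Mumford regularity. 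Second, for each $\sigma\in G$ the conjugate $(R/I)^\sigma=R/I^\sigma$ has exactly the same numerical invariants as $R/I$, since applying $\sigma$ to the coefficients turns a minimal graded free resolution of $R/I$ into one of $(R/I)^\sigma$. Finally, I would observe that the isomorphism $\Psi$ of Theorem~\ref{thm:weil} is induced by a graded $K$-algebra isomorphism $K[x_{i,\sigma}]\xrightarrow{\ \sim\ }S\otimes_k K$ (each $x_{i,\sigma}$ going to a linear form) carrying the defining ideal of $\bigotimes_\sigma(R/I)^\sigma$ onto $\Weil(I)\otimes_k K$; in particular $\Weil(I)$ is homogeneous, so $S/\Weil(I)$ is a graded $k$-algebra and all the invariants in the statement are defined.

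I would prove (vi) first: by the reductions above, $\HS_{S/\Weil(I)}(z)=\HS_{(S/\Weil(I))\otimes_k K}(z)=\HS_{\bigotimes_\sigma(R/I)^\sigma}(z)$, and iterating Lemma~\ref{lemma HSproduct} turns the last quantity into $\prod_{\sigma\in G}\HS_{(R/I)^\sigma}(z)=\bigl(\HS_{R/I}(z)\bigr)^n$. Part (i) then follows by comparing the orders of the pole at $z=1$ on the two sides of (vi) (equivalently, from the identity $\dim(A\otimes_k B)=\dim A+\dim B$ for finitely generated graded $k$-algebras). Part (vii) follows from (i) and (vi): writing $\HS_{R/I}(z)=h(z)/(1-z)^{\dim(R/I)}$ with $h(1)=e(R/I)$, equation (vi) gives $\HS_{S/\Weil(I)}(z)=h(z)^n/(1-z)^{n\dim(R/I)}$, and since $\dim(S/\Weil(I))=n\dim(R/I)$ by (i), we read off $e(S/\Weil(I))=h(1)^n=e(R/I)^n$.

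For (ii) and (v) I would iterate Lemma~\ref{lemma:resolutiontensorproduct}: the tensor product over $G$ of minimal graded free resolutions of the $(R/I)^\sigma$ is a minimal graded free resolution of $\bigotimes_\sigma(R/I)^\sigma$ over $K[x_{i,\sigma}]$; transporting it through $\Psi$ and descending to $k$ computes a minimal graded free resolution of $S/\Weil(I)$. Its length is the sum of the lengths of the factor resolutions, so $\mathrm{proj.dim}(S/\Weil(I))=\sum_\sigma\mathrm{proj.dim}((R/I)^\sigma)=n\,\mathrm{proj.dim}(R/I)$, which is (ii). Its graded Betti numbers are the convolution of those of the factors, so $\reg(S/\Weil(I))=\max\{j-i:\beta_{i,j}\neq0\}=\sum_\sigma\reg((R/I)^\sigma)=n\,\reg(R/I)$ (the maximum of a sum of independently varying quantities is the sum of the maxima); combining this with $\reg(\Weil(I))=\reg(S/\Weil(I))+1$ and $\reg(I)=\reg(R/I)+1$ gives $\reg(\Weil(I))=n\reg(I)-n+1$, which is (v).

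Parts (iii) and (iv) are then consequences of the previous ones. Since $S$ is Cohen--Macaulay of dimension $nm$, the Auslander--Buchsbaum formula gives $\mathrm{depth}(S/\Weil(I))=nm-\mathrm{proj.dim}(S/\Weil(I))$; if $R/I$ is Cohen--Macaulay, then $\mathrm{proj.dim}(R/I)=m-\dim(R/I)$, so by (i) and (ii), $\mathrm{proj.dim}(S/\Weil(I))=nm-n\dim(R/I)=nm-\dim(S/\Weil(I))$, whence $\mathrm{depth}(S/\Weil(I))=\dim(S/\Weil(I))$, proving (iii). For (iv), if $R/I$ is a complete intersection then $I=(f_1,\dots,f_c)$ with $c=\mathrm{codim}(I)$, so $\Weil(I)$ is generated by the $nc$ polynomials in $\Weil(f_1)\cup\dots\cup\Weil(f_c)$; on the other hand $\mathrm{codim}(\Weil(I))=nm-n\dim(R/I)=nc$ by (i), and Krull's height theorem forces $\Weil(I)$ to require at least $nc$ generators, hence exactly $nc$. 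An ideal of height $c'$ generated by $c'$ elements in the Cohen--Macaulay ring $S$ is generated by a regular sequence, so $S/\Weil(I)$ is a complete intersection. I expect the only genuine work to lie in the first paragraph—checking carefully that base change to $K$ leaves every invariant in the statement unchanged, and that Theorem~\ref{thm:weil} can be upgraded to the graded polynomial-ring isomorphism that feeds the two lemmas—together with keeping track of the additive constant in (v).
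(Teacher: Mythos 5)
Your proposal is correct and follows essentially the same route as the paper: base change to $K$, the isomorphism of Theorem~\ref{thm:weil}, Lemma~\ref{lemma:resolutiontensorproduct} for projective dimension and regularity (with the same ``sum of maxima'' computation on the shifts), Lemma~\ref{lemma HSproduct} for the Hilbert series, and Auslander--Buchsbaum for the Cohen--Macaulay statement. The only differences are cosmetic: you derive (i) from the pole order in (vi) rather than directly from the isomorphism, and you supply an explicit generator-counting argument for (iv), which the paper's proof leaves implicit.
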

	
	\begin{proof}
		Since the above properties are invariant under field extension, we will replace $S/\Weil(I)$ by $S/\Weil(I)\otimes_kK\cong S'/\Weil(I)$ where $S'=S\otimes_kK=K[x_{i,j}]_{i=1,\dots,m, j =1,\dots,n}$.
		By Theorem~\ref{thm:weil}, we have a degree-preserving $K$-algebra isomorphism 
		\begin{equation}\label{eq:Weilisomorphism}
			S'/\Weil(I)\cong \displaystyle\bigotimes_{\sigma\in G}(R/I)^{\sigma},
		\end{equation} 
		where $G=\mathrm{Gal}(K/k)$. This yields \textit{(i)}.
		
		Now, let  $\mathbb{F}_{\bullet}\rightarrow R/I\rightarrow0$ be a minimal graded free resolution of $R/I$ with $\mathbb{F}=(F_i, d_i)$.
		For any $\sigma\in G$, a minimal graded free resolution of $(R/I)^\sigma$ is given by $\mathbb{F}_{\bullet}^{\sigma}\rightarrow (R/I)^\sigma\rightarrow0$, where the free modules of $\mathbb{F}_{\bullet}^{\sigma}$ are the same as those of $\mathbb{F}_{\bullet}$ and the differential maps are twisted by the action of $\sigma$ on the coefficients. 
		Therefore, by Lemma~\ref{lemma:resolutiontensorproduct} we have the following minimal graded free resolution:
		\[
		\bigotimes_{\sigma \in G}\mathbb{F}^{\sigma}_{\bullet}\rightarrow\bigotimes_{\sigma\in G} (R/I)^{\sigma}\rightarrow0.
		\]
		We can use this fact to prove the desired properties.
		First, we observe that if $\mathbb{F}_\bullet$ has length $p=\mathrm{proj.dim}(R/I)$, then the length of $\bigotimes\mathbb{F}^{\sigma}_{\bullet}$ will be $np$ with the last non-zero module being $F_p\otimes\cdots\otimes F_p$.
		This proves \textit{(ii)}.
		
		To prove $\textit{(iii)}$ we recall that, by the Auslander-Buchsbaum formula, a quotient $T/J$ of a polynomial ring $T$ by an ideal $J$ is Cohen-Macaulay if and only if $\mathrm{proj.dim}(T/J)=\mathrm{ht}(J)$.
		From \textit{(ii)} and assuming $R/I$ Cohen-Macaulay, we get
		\[
		\mathrm{proj.dim}\left(S'/\Weil(I)\right)=n\cdot\mathrm{proj.dim}(R/I)=n\cdot\mathrm{ht}(I).
		\]
		On the other hand, we have
		\[
		\begin{split}
			\mathrm{ht}\left(\Weil(I)\right)&=\dim (S')-\dim\left(S'/\Weil(I)\right)\\
			&= n\cdot \dim (R) - n \cdot \dim (R/I)\\
			&= n\cdot \mathrm{ht}(I),
		\end{split}
		\]
		where the equality $\dim(S'/\Weil(I))=n\cdot\dim (R/I)$ follows from $\textit{(i)}$.
		
		To prove \textit{(v)}, recall that for any homogeneous ideal $I\subseteq R$ one has $\reg(I)=\reg(R/I)+1$ and $\reg(R/I)=\max\{a_i-i\}$, where $a_i=\max\{j: R(-j)\text{ is a direct summand of }F_i\}.$ We assume that $\reg(R/I)$ is achieved in the resolution $\mathbb{F}_\bullet$ in homological position $i$, that is, $\reg(R/I)=j-i$ with $F_i=R(-j)\oplus F'_i$, and $F'_i$ is a free $R$-module.
		The free module $F_i\otimes\cdots\otimes F_i$ is a direct summand of $\left(\bigotimes\mathbb{F}^{\sigma}\right)_{ni}$, and contains the $R$-module $R(-j)\otimes\cdots\otimes R(-j)\cong R(-nj)$ as direct summand.
		This yields 
		\[
		\reg(S'/\Weil(I))=\reg\left(\bigotimes_{\sigma\in G}(R/I)^{\sigma}\right)\geq nj-ni=n(j-i)=n\cdot\reg(R/I).
		\]
		To prove the reverse inequality, fix $h\leq \mathrm{proj.dim}(S'/\Weil(I))$, and consider
		\[
		\left(\bigotimes\mathbb{F}^{\sigma}\right)_h=\bigoplus_{i_1+\cdots+i_n=h}F_{i_1}\otimes\cdots\otimes F_{i_n}.
		\]
		The maximum shift in each direct summand $F_{i_1}\otimes\cdots\otimes F_{i_n}$ above is $a_{i_1}+\cdots+a_{i_n}$.
		Thus, we obtain
		\[
		a_{i_1}+\dots+a_{i_n}-h=(a_{i_1}-i_1)+\cdots+(a_{i_n}-i_n)\leq n\cdot \reg(R/I).
		\]
		Since the regularity of $S'/\Weil(I)$ is the maximum of all the above expressions of the form $a_{i_1}+\dots+a_{i_n}-h$ for $i_1+\cdots+i_n=h$ and $h\leq \mathrm{proj.dim}(S'/\Weil(I))$, we obtain
		\[
		\reg(S'/\Weil(I))\leq n\cdot \reg(R/I),
		\]
		which gives the desired equality.
		\par Property $\textit{(vi)}$ follows directly from \eqref{eq:Weilisomorphism} and Lemma~\ref{lemma HSproduct}.
		%
		
		Finally, property \textit{(vii)} follows from $\textit{(vi)}$ and the fact that the Hilbert-Samuel multiplicity of $R/I$ is the evaluation in $1$ of the numerator of the simplified Hilbert series of $R/I$, see e.g. \cite[Section 1]{Val98}.
	\end{proof}
	
	From Theorem~\ref{thm: Weilrestrictionproperties} we immediately obtain bounds on the solving degree and degree of regularity of the Weil restriction of homogeneous systems of polynomials.
	In the next section we will extend these results to the non-homogeneous case.
	
	\begin{corollary}\label{cor: solvingdegreeweilrestriction}
		Let $k\hookrightarrow K$ be a finite Galois field extension of degree $n$.  Let $\mathcal{F}\subseteq R=K[x_1,\dots,x_m]$ be a system of homogeneous polynomials with Weil restriction $\Weil\left(\mathcal{F}\right)\subseteq S=k[x_{i,j}]_{i=1,\dots,m, j =1,\dots,n}$.
		\begin{enumerate}
			\item If $\Weil(\mathcal{F})$ is in generic coordinates over $\overline{k}$, then
			\[
			\solvdeg\left(\Weil(\mathcal{F})\right)\leq n\cdot\reg(\mathcal{F})-n+1.
			\]
			\item Assume that $(\mathcal{F})_d=R_d$ for $d\gg 0$. Then \[d_{\reg}\left(\Weil(\mathcal{F})\right)=n\cdot d_{\reg}(\mathcal{F})-n+1.\]
		\end{enumerate}
	\end{corollary}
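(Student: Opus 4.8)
The plan is to read off both statements from Theorem~\ref{thm: Weilrestrictionproperties} together with the comparisons between $\solvdeg$, $\dregF{\,\cdot\,}$ and homological invariants recalled in \S\ref{subsection sd and dreg}. Write $I=(\mathcal F)\subseteq R$. Two elementary remarks are used throughout. First, $\Weil$ commutes with passing to the generated ideal, so $(\Weil(\mathcal F))=\Weil(I)$ by Definition~\ref{def:Weilrestriction}. Second, since $\psi(x_i)$ is a linear form, the homomorphism $\psi$ of \eqref{eq:changevariables} is degree-preserving, so the Weil restriction of a homogeneous polynomial of degree $e$ is a set of homogeneous polynomials of degree $e$; in particular $\Weil(\mathcal F)$ is a homogeneous system, and $(\Weil(\mathcal F)^{\ttop})=(\Weil(\mathcal F))=\Weil(I)$, just as $(\mathcal F^{\ttop})=I$.

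For part~(1) I would invoke the result of \cite{CG21} asserting that a homogeneous system $\mathcal G$ whose ideal $(\mathcal G)$ is in generic coordinates over $\overline{k}$ satisfies $\solvdeg(\mathcal G)\le\reg((\mathcal G))$. Applying it to $\mathcal G=\Weil(\mathcal F)$, which is homogeneous by the remark above and in generic coordinates over $\overline{k}$ by hypothesis, gives $\solvdeg(\Weil(\mathcal F))\le\reg(\Weil(I))$; Theorem~\ref{thm: Weilrestrictionproperties}(v) then rewrites the right-hand side as $n\cdot\reg(I)-n+1=n\cdot\reg(\mathcal F)-n+1$.

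For part~(2), the hypothesis $(\mathcal F)_d=R_d$ for $d\gg 0$ says that $R/I$ is Artinian, so $\HS_{R/I}(z)$ is a polynomial; since the Hilbert function of a standard graded algebra, once zero, stays zero (if $R_d\subseteq I$ then $R_{d+1}=R_1R_d\subseteq I$), this polynomial has degree exactly $\dregF{\mathcal F}-1$ — here we may assume $\dregF{\mathcal F}\ge 1$, excluding the degenerate case $I=R$ — with strictly positive coefficient in top degree. By Theorem~\ref{thm: Weilrestrictionproperties}(i), $\dim(S/\Weil(I))=0$, so $\dregF{\Weil(\mathcal F)}$ is well defined; and Theorem~\ref{thm: Weilrestrictionproperties}(vi) gives $\HS_{S/\Weil(I)}(z)=\HS_{R/I}(z)^n$, a polynomial with nonnegative coefficients, of degree $n(\dregF{\mathcal F}-1)$, whose top coefficient is an $n$-th power of a positive integer, hence positive. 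Therefore $\HF_{S/\Weil(I)}(d)=0$ precisely for $d>n(\dregF{\mathcal F}-1)$, which yields $\dregF{\Weil(\mathcal F)}=n(\dregF{\mathcal F}-1)+1=n\cdot\dregF{\mathcal F}-n+1$.

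The arguments are short because the substantive content is in Theorem~\ref{thm: Weilrestrictionproperties}; the main point requiring attention is in part~(1), where one must apply the cited result of \cite{CG21} with the normalization of regularity matching Theorem~\ref{thm: Weilrestrictionproperties}(v) (regularity of the ideal, not of $R/I$) and check that its ``generic coordinates'' hypothesis is exactly the one used here. For part~(2) the only thing to verify, beyond Theorem~\ref{thm: Weilrestrictionproperties}, is that $\Weil(\mathcal F)$ is homogeneous and that its degree of regularity is well defined, both of which follow from the remarks above.
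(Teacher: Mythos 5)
Your proof of part (1) is the paper's own: \cite[Theorem~9]{CG21} gives $\solvdeg(\Weil(\mathcal{F}))\le\reg(\Weil(\mathcal{F}))$ under the generic-coordinates hypothesis, and Theorem~\ref{thm: Weilrestrictionproperties}(v) rewrites the right-hand side as $n\cdot\reg(\mathcal{F})-n+1$. For part (2) you arrive at the same conclusion by a slightly different route: the paper notes that for a homogeneous system with $(\mathcal{F})_d=R_d$ for $d\gg 0$ one has $\dregF{\mathcal{F}}=\reg(\mathcal{F})$ (and implicitly the same for $\Weil(\mathcal{F})$, which is homogeneous and Artinian by item (i)), so the claim again reduces to item (v); you instead invoke item (vi), $\HS_{S/\Weil(I)}(z)=\HS_{R/I}(z)^n$, and read the degree of regularity off the degree of the polynomial Hilbert series, using that the Hilbert function of a standard graded algebra stays zero once it vanishes. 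Both arguments are correct and equally short; yours avoids the identification of the degree of regularity with Castelnuovo--Mumford regularity in the Artinian case at the cost of a small positivity check on coefficients, and you are right to single out the degenerate case $I=R$, where the stated formula would give the negative value $1-n$ (an edge case the paper also passes over in silence).
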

	
	\begin{proof}
		\textit{(1)} From \cite[Theorem~9]{CG21} we have that 
		\[
		\solvdeg\left(\Weil(\mathcal{F})\right)\leq \reg(\Weil(\mathcal{F})).
		\]
		Now, by Theorem~\ref{thm: Weilrestrictionproperties} we get
		\[
		\reg(\Weil(\mathcal{F}))=n\cdot\reg(\mathcal{F})-n+1
		\]
		as required.
		
		\textit{(2)} Since the system $\mathcal{F}$ is homogeneous and $(\mathcal{F})_d=R_d$ for $d\gg 0$, then $\dregF{\mathcal{F}}=\reg({\mathcal{F}})$. Then the claim follows from Theorem~\ref{thm: Weilrestrictionproperties}.
	\end{proof}

	\section{Solving degree and degree of regularity of Weil restriction systems}\label{inhomog}
	
	We now consider the situation when the system $\mathcal{F}$ is not necessarily homogeneous. In this case, the Weil restriction system
	$\Weil(\mathcal{F})$ is not necessarily homogeneous and we cannot bound its solving degree with the Castelnuovo-Mumford regularity of the corresponding ideal, but we should rather look at the ideal generated by the homogenized system $\Weil(\mathcal{F})^h$. However, we could also swap the two operations. Namely, first homogenize the system $\mathcal{F}$ and then apply Weil restriction to get a new system $\Weil(\mathcal{F}^h)$.
	Clearly, the systems $\Weil(\mathcal{F}^h)$ and $\Weil(\mathcal{F})^h$ are not the same. In fact, they even live in different polynomial rings. However, there is a strict relation between these two systems that we are going to explore in this section. Before that, let us fix the notation.
	
	\begin{notation}\label{notation2}
		Let $k\hookrightarrow K$ be a finite Galois field extension of degree $n$ with a fixed basis $\{\alpha_1=1,\alpha_2,\dots,\alpha_n\}$ of $K$ over $k$.
		Let $R=K[x_1,\dots,x_m]$ be a polynomial ring, and let
		$S =k[x_{i,j}]_{i=1,\dots,m, j =1,\dots,n}$
		be the polynomial ring in $nm$ variables over $k$.
		We consider a system of polynomials $\mathcal{F}=\{f_1,\dots,f_r\}\subseteq R$ not all homogeneous and the corresponding homogenized system $\mathcal{F}^h\subseteq R[t]$ obtained by homogenization with respect to a new variable $t$.
		We have two polynomial systems:
		\begin{itemize}
			\item $\Weil(\mathcal{F}^h)\subseteq S[t_1,\dots,t_n]$, obtained by Weil restriction of $\mathcal{F}^h$, where $t_1,\dots,t_n$ are the variables obtained by Weil restriction from $t$, i.e., $\sum \alpha_i t_i=t$.
			\item $\Weil(\mathcal{F})^h\subseteq S[t]$, obtained by homogenization of $\Weil(\mathcal{F})$ with respect to a new variable $t$.
		\end{itemize}
	\end{notation}

	\begin{lemma}\label{lemma Weil relation}
		Let $\mathcal{F}\subseteq R$ be a system of polynomials as in Notation \ref{notation2}. Then
		the equations of $\Weil(\mathcal{F})^h$ are obtained from those of $\Weil(\mathcal{F}^h)$ by setting $t_1=t$ and $t_2=\cdots=t_n=0$.
	\end{lemma}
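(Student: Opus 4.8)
It suffices to treat a single polynomial $f\in R$, say of degree $d$, and then apply the argument to each generator of $\mathcal F$, since $\Weil(-)$, homogenization, and the substitution $t_1=t,\ t_2=\dots=t_n=0$ all act generator by generator. Write $\Weil(f)=\{h_1,\dots,h_n\}$ and $\Weil(f^h)=\{g_1,\dots,g_n\}$; the goal is to show that $g_l$, after setting $t_1=t$ and $t_2=\dots=t_n=0$, equals $h_l^h$ for every $l$. The plan is to unwind the two definitions and compare them.

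Decompose $f=\sum_{j=0}^d f^{(j)}$ into homogeneous components, so that $f^h=\sum_{j=0}^d t^{d-j}f^{(j)}$. Extending $\psi$ by $\psi(t)=t_1\alpha_1+\dots+t_n\alpha_n$, the defining relation of $\Weil(f^h)$ reads
\[
\sum_{l=1}^n g_l\,\alpha_l=f^h\bigl(\psi(x_1),\dots,\psi(x_m),\psi(t)\bigr)=\sum_{j=0}^d \psi(t)^{d-j}\,f^{(j)}\bigl(\psi(x_1),\dots,\psi(x_m)\bigr).
\]
Specializing $t_1=t$ and $t_2=\dots=t_n=0$ replaces $\psi(t)$ by $t\alpha_1=t$ (this is the only place the normalization $\alpha_1=1$ is used), so the right-hand side becomes $\sum_{j=0}^d t^{d-j}f^{(j)}(\psi(x))$. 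Denoting by $[\,\cdot\,]_l$ the $k$-linear projection onto the coefficient of $\alpha_l$ in the basis $\{\alpha_1,\dots,\alpha_n\}$ and reading off $\alpha_l$-components, this gives $g_l|_{t_1=t,\,t_2=\dots=t_n=0}=\sum_{j=0}^d t^{d-j}\,[f^{(j)}(\psi(x))]_l$. On the other hand, the defining relation of $\Weil(f)$ gives $h_l=[f(\psi(x))]_l=\sum_{j=0}^d [f^{(j)}(\psi(x))]_l$; since $\psi$ maps each $x_i$ to a linear form in the $x_{i,l}$, each $f^{(j)}(\psi(x))$ is homogeneous of degree $j$, hence so is each of its $\alpha_l$-components, so $[f^{(j)}(\psi(x))]_l$ is precisely the degree-$j$ part of $h_l$. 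Therefore $h_l^h=\sum_{j=0}^{\deg h_l}t^{\deg h_l-j}[f^{(j)}(\psi(x))]_l$, and comparing the two formulas the lemma follows once we know that $\deg h_l=d$ for every $l$, i.e. that $[f^{(d)}(\psi(x))]_l\neq 0$.

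Proving this nonvanishing is the only real work, and is the step I expect to be the main obstacle: injectivity of $\psi$ only gives $f^{(d)}(\psi(x))\neq 0$, not that each of its $n$ components is nonzero. I would argue with the method of the proof of Theorem~\ref{thm:weil}. For $\sigma\in G$ put $\psi_\sigma(x_i):=\sum_{l}x_{i,l}\,\sigma(\alpha_l)$, so that $\sigma\bigl(f^{(d)}(\psi(x))\bigr)=(f^{(d)})^\sigma(\psi_\sigma(x))$; applying every $\sigma\in G$ to $f^{(d)}(\psi(x))=\sum_l[f^{(d)}(\psi(x))]_l\alpha_l$ yields the linear system
\[
\bigl((f^{(d)})^\sigma(\psi_\sigma(x))\bigr)_{\sigma\in G}=\bigl(\sigma(\alpha_l)\bigr)_{\sigma,l}\cdot\bigl([f^{(d)}(\psi(x))]_l\bigr)_{l},
\]
whose coefficient matrix is invertible (it is, up to transposition, the matrix used in the proof of Theorem~\ref{thm:weil}). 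Hence $[f^{(d)}(\psi(x))]_l$ is a $K$-linear combination, with at least one nonzero coefficient, of the polynomials $(f^{(d)})^\sigma(\psi_\sigma(x))$, $\sigma\in G$. Each of these is nonzero (because $f^{(d)}\neq 0$ and $\psi_\sigma$ is injective) and homogeneous of degree $d$; moreover the $nm$ linear forms $\psi_\sigma(x_i)$ ($\sigma\in G$, $1\le i\le m$) are algebraically independent over $K$, since for each fixed $i$ they span the same $K$-subspace as $x_{i,1},\dots,x_{i,n}$ by invertibility of $(\sigma(\alpha_l))$, and these subspaces are in direct sum for distinct $i$. Thus $(f^{(d)})^\sigma(\psi_\sigma(x))$ is a nonzero homogeneous polynomial in the forms $\psi_\sigma(x_1),\dots,\psi_\sigma(x_m)$, and for distinct $\sigma$ these sets of forms are disjoint; a $K$-linear combination with a nonzero coefficient of nonzero homogeneous polynomials (of positive degree) in pairwise disjoint sets of variables is nonzero, so $[f^{(d)}(\psi(x))]_l\neq 0$. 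This gives $\deg h_l=d$, hence $g_l|_{t_1=t,\,t_2=\dots=t_n=0}=h_l^h$, and the statement for $\mathcal F$ follows by applying this to each generator.
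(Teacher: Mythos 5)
Your proof is correct and follows the same basic route as the paper's: decompose $f$ into homogeneous components, write out the defining identities for $\Weil(f^h)$ and for $\Weil(f)$, specialize $t_1=t$, $t_2=\cdots=t_n=0$, and compare $\alpha_\ell$-coefficients using the $k$-linear independence of $\alpha_1,\dots,\alpha_n$. The one genuine difference is that you isolate and prove the claim that every component $h_\ell$ of $\Weil(f)$ has degree exactly $d=\deg f$ (equivalently, that $[f^{(d)}(\psi(x))]_\ell\neq 0$), via the Galois-conjugation argument from the proof of Theorem~\ref{thm:weil} and the invertibility of the matrix $(\sigma(\alpha_\ell))_{\sigma,\ell}$. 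The paper instead disposes of this point with the one-line remark that homogenization commutes with the substitution $x_i\mapsto\sum_j x_{i,j}\alpha_j$ --- a statement that is only literally true because of the nonvanishing you prove, since if some $h_\ell$ had degree $e<d$ its homogenization would differ from $g_\ell|_{t_1=t,\,t_2=\cdots=t_n=0}$ by a factor of $t^{d-e}$. Your extra argument is sound: the forms $\psi_\sigma(x_i)$ constitute a coordinate system, the conjugates $(f^{(d)})^\sigma(\psi_\sigma(x))$ involve pairwise disjoint subsets of these coordinates, and a $K$-linear combination with some nonzero coefficient of nonzero homogeneous polynomials of positive degree in disjoint variable sets cannot vanish. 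So your write-up is strictly more complete than the paper's, at the cost of a somewhat longer proof.
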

	
	\begin{proof}
		Let $f\in\mathcal{F}$, let $d=\deg f$ and write $f=\sum_{a=0}^df_a$, where $f_a$ is homogeneous of degree $a$.
		Then $f^h=\sum_{a=0}^df_at^{d-a}$.
		The polynomials $g_1,\dots,g_n$ of the Weil restriction $\Weil(f^h)$ of $f^h$ are obtained from the relation
		\begin{equation}\label{WRfh}
			\begin{split}
				g_1\alpha_1+\cdots+g_n\alpha_n&=f^h\left(\sum_{j=1}^nx_{i,j}\alpha_j,\sum_{j=1}^nt_j\alpha_j:\ i=1,\dots,m\right)\\
				&= \sum_{a=0}^d f_a\left(\sum_{j=1}^nx_{i,j}\alpha_j:\ i=1,\dots,m\right)\left(\sum_{j=1}^nt_j\alpha_j\right)^{d-a}.
			\end{split}
		\end{equation}
		On the other hand, the polynomials $h_1,\dots,h_n$ of the system $\Weil(f)^h$ are obtained from the identity 
		\[
		\begin{split}
			h_1\alpha_1+\cdots+h_n\alpha_n&= \sum_{a=0}^d f_a\left(\sum_{j=1}^nx_{i,j}\alpha_j:\ i=1,\dots,m\right)t^{d-a},
		\end{split}
		\]
		since the homogenization with respect to $t$ and substituting $\sum_{j=1}^nx_{i,j}\alpha_j$ for $x_i$ are commuting operations. 
		Setting $t_1=t$ and $t_2=\cdots=t_n=0$ into (\ref{WRfh}) yields
		\[
		\begin{split}
			g_1(x_{i,j},t,0,\ldots,0)\alpha_1+\cdots+g_n(x_{i,j},t,0,\ldots,0)\alpha_n&=\sum_{a=0}^d f_a\left(\sum_{j=1}^nx_{i,j}\alpha_j:\ i=1,\dots,m\right)t^{d-a}\\
			&=h_1\alpha_1+\cdots+h_n\alpha_n.
		\end{split}
		\]
		Since $g_1,\ldots,g_n,h_1,\ldots,h_n$ have coefficients in $k$ and $\alpha_1,\ldots,\alpha_n$ are $k$-linearly independent, one has $g_\ell(x_{i,j},t,0,\ldots,0)=h_\ell$ for $\ell=1,\ldots,n$.
	\end{proof}
	
	\begin{example}
		Let $\FF_8=\FF_2[\alpha]$ with $\alpha^3=\alpha+1$ and we fix the basis $\{1,\alpha,\alpha^2\}$ of $\FF_8$ over $\FF_2$.
		We consider the system $\mathcal{F}=\{f\}$ where $f=y^2+xy+\alpha x+\alpha^2\in \FF_8[x,y]$.
		Then we have $f^h=y^2+xy+\alpha xt+\alpha^2t^2$ and its Weil restriction is the system $\Weil(f^h)=\{g_1,g_2,g_3\}$ where
		\[
		\begin{split}
			g_1&=y_1^2+x_1y_1+x_2y_3+x_3y_2+x_1t_3+x_2t_2+x_3t_1+x_3t_3+t_3^2,\\
			g_2&=y_3^2+x_1y_2+x_2y_1+x_2y_3+x_3y_2+x_3y_3+x_1t_1+x_1t_3+x_2t_2+x_2t_3+x_3t_1+x_3t_2+x_3t_3+t_2^2,\\
			g_3&=y_2^2+y_3^2+x_1y_3+x_2y_2+x_3y_1+x_3y_3+x_1t_2+x_2t_1+x_2t_3+x_3t_2+x_3t_3+t_1^2+t_2^2+t_3^2.
		\end{split}
		\]
		Here we used the substitution $x=x_1+\alpha x_2+\alpha^2x_3$ and similarly for $y$ and $t$.
		On the other hand, the polynomials $h_1, h_2,h_3$ of the system $\Weil(f)^h$ are
		\[
		\begin{split}
			h_1&=y_1^2+x_1y_1+x_2y_3+x_3y_2+x_3t, \\
			h_2&=y_3^2+x_1y_2+x_2y_1+x_2y_3+x_3y_2+x_3y_3+x_1t+x_3t,\\
			h_3&=y_2^2+y_3^2+x_1y_3+x_2y_2+x_3y_1+x_3y_3+x_2t+t^2.
		\end{split}
		\]
		It is easy to check that $h_1,h_2,h_3$ are obtained from $g_1,g_2,g_3$ by setting $t_1=t$ and $t_2=t_3=0$.
	\end{example}

	\begin{lemma}\label{lemma t1..tnregular}
		Let $\mathcal{F}\subseteq R$ be a system of polynomials as in Notation~\ref{notation2}. Then $t$ is non zerodivisor in $R[t]/(\mathcal{F}^h)$ if and only if $t_1,\dots,t_n$ are a regular sequence in $S[t_1,\dots,t_n]/(\Weil(\mathcal{F}^h))$.
	\end{lemma}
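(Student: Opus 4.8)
Write $B=R[t]/(\mathcal{F}^h)$ and $C=S[t_1,\dots,t_n]/(\Weil(\mathcal{F}^h))$; these are connected $\mathbb{N}$-graded Noetherian algebras, over $K$ and over $k$ respectively, and we may assume $B\neq 0$, since otherwise both sides of the asserted equivalence hold trivially. The plan is to transport the statement across the isomorphism of Theorem~\ref{thm:weil}. A first, formal reduction: since $k\hookrightarrow K$ is faithfully flat, for any sequence of elements of $C$, being a regular sequence is equivalent to its image being a regular sequence in $C\otimes_kK$ (flatness gives injectivity of the relevant multiplication maps after base change, faithful flatness gives the converse and also that the quotient stays nonzero). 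Hence it suffices to prove that $t_1,\dots,t_n$ is a regular sequence in $C\otimes_kK$ if and only if $t$ is a non-zerodivisor in $B$.

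Next I would apply Theorem~\ref{thm:weil} to the homogeneous ideal $(\mathcal{F}^h)\subseteq K[x_1,\dots,x_m,t]$, regarded as an ideal in $m+1$ variables with the last one equal to $t$ and its Weil-restriction variables being $t_1,\dots,t_n$. This produces a degree-preserving $K$-algebra isomorphism
\[
\Psi\colon\ \bigotimes_{\sigma\in G}B^{\sigma}\ \longrightarrow\ C\otimes_kK,
\]
and, by the formula defining $\Psi$ on variables, the class $t^{(\sigma)}$ of $t$ coming from the $\sigma$-th tensor factor is sent to $\sigma(\alpha_1)t_1+\cdots+\sigma(\alpha_n)t_n$. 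The matrix $\bigl(\sigma(\alpha_j)\bigr)_{\sigma\in G,\,1\le j\le n}$ is invertible over $K$ (this is the crucial point in the proof of Theorem~\ref{thm:weil}), so the degree-one forms $t_1,\dots,t_n$ are obtained from $\Psi(t^{(\sigma_0)}),\dots,\Psi(t^{(\sigma_{n-1})})$ by an invertible $K$-linear substitution. I would then invoke the standard fact that in a connected graded Noetherian $K$-algebra a sequence of degree-one forms is a regular sequence if and only if every invertible $K$-linear recombination of it is (for instance: an invertible linear substitution of the entries leaves the Koszul complex unchanged, and, in the graded-local situation, a sequence contained in the irrelevant ideal is regular precisely when its Koszul homology vanishes in positive degrees). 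Pushing everything through $\Psi$, this reduces the claim to: $t^{(\sigma_0)},\dots,t^{(\sigma_{n-1})}$ is a regular sequence in $\bigotimes_{\sigma\in G}B^{\sigma}$ if and only if $t$ is a non-zerodivisor in $B$.

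Finally I would prove this last equivalence by peeling off the tensor factors one at a time. For every $\sigma\in G$, applying $\sigma$ to the coefficients gives a ring isomorphism $B^{\sigma}=R[t]/(\mathcal{F}^h)^{\sigma}\cong B$ which fixes the variable $t$, so \enquote{$t$ is a non-zerodivisor in $B^{\sigma}$} is literally the same condition as \enquote{$t$ is a non-zerodivisor in $B$}, and, $B$ being a nonzero connected graded ring, $t$ is not a unit in $B$. Fixing an ordering $G=\{\sigma_0,\dots,\sigma_{n-1}\}$ and writing $\bigotimes_{\sigma}B^{\sigma}=B^{\sigma_0}\otimes_KM$ with $M=\bigotimes_{i\geq 1}B^{\sigma_i}$ a nonzero $K$-algebra, hence a free $K$-module, multiplication by $t^{(\sigma_0)}=t\otimes 1$ on $B^{\sigma_0}\otimes_KM$ is injective if and only if multiplication by $t$ on $B^{\sigma_0}$ is, and in that case the quotient is $(B^{\sigma_0}/tB^{\sigma_0})\otimes_KM$. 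In the direction where $t$ is assumed to be a non-zerodivisor one has $B/tB\neq 0$, so one may iterate over the remaining factors, obtaining that $t^{(\sigma_0)},\dots,t^{(\sigma_{n-1})}$ is a regular sequence with final quotient $\bigotimes_{\sigma}(B/tB)^{\sigma}\neq 0$; in the other direction, \enquote{$t^{(\sigma_0)}$ is a non-zerodivisor} already forces $t$ to be a non-zerodivisor in $B^{\sigma_0}\cong B$. The only genuinely delicate step is the middle one — correctly identifying $t_1,\dots,t_n$ with an invertible linear recombination of the \enquote{diagonal} elements $t^{(\sigma)}$ under $\Psi$, and using invariance of regular sequences under such recombinations; the remaining arguments are routine manipulations with flat base change and with the isomorphism of Theorem~\ref{thm:weil}.
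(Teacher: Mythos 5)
Your proof is correct, but it takes a genuinely different route from the paper's. The paper deduces the lemma numerically: it invokes Pardue's Hilbert-series criterion (a sequence of linear forms $y_1,\dots,y_r$ in a graded algebra $A$ is regular if and only if $\HS_{A/(y_1,\dots,y_r)}(z)=(1-z)^r\HS_A(z)$), combines it with part (vi) of Theorem~\ref{thm: Weilrestrictionproperties} applied to both $(\mathcal{F}^h)$ and $(\mathcal{F}^h\cup\{t\})$, and uses the identity $\Weil(\mathcal{F}^h\cup\{t\})=\Weil(\mathcal{F}^h)\cup\{t_1,\dots,t_n\}$; raising the one-variable Hilbert-series relation to the $n$-th power then gives the $n$-variable one and conversely. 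You instead argue structurally through the isomorphism $\Psi$ of Theorem~\ref{thm:weil}: after a faithfully flat base change to $K$, you identify $t_1,\dots,t_n$ with an invertible $K$-linear recombination of the diagonal elements $t^{(\sigma)}$, invoke Koszul-complex invariance of regularity under such recombinations, and peel off the tensor factors one at a time. Both arguments are sound and both ultimately rest on the same isomorphism (the paper's Theorem~\ref{thm: Weilrestrictionproperties}(vi) is itself derived from it). The paper's proof is shorter given the machinery already in place; yours is more self-contained, avoids the numerical criterion, and yields extra information, namely the explicit identification of the quotient $\bigl(C/(t_1,\dots,t_n)C\bigr)\otimes_kK\cong\bigotimes_{\sigma\in G}(B/tB)^{\sigma}$, which in fact re-proves the Hilbert-series identity the paper relies on. The one step that genuinely requires care is the invariance of regular sequences under invertible linear recombination --- false in general for non-local rings --- but you correctly restrict to degree-one forms in a connected graded Noetherian algebra and justify it via vanishing of positive Koszul homology, so the argument goes through.
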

	
	\begin{proof}
		First, we recall the following fact on Hilbert series, shown in \cite[Proposition 1]{Par10}.
		Given a finitely generated $\mathbb{N}$-graded algebra $A$ over a field and elements $y_1,\ldots,y_r\in A_1$, then $\HS_{A/(y_1,\ldots,y_r)}(z)=(1-z)^r\HS_A(z)$ if and only if $y_1,\ldots,y_r$ is a regular sequence in $A$.
		Hence $t$ is non zerodivisor modulo $(\mathcal{F}^h)$ if and only if
		\[
		\HS_{R/(\mathcal{F}^h\cup\{t\})}(z)=(1-z)\cdot\HS_{R/(\mathcal{F}^h)}(z).
		\]
		Moreover, $t_1,\dots,t_n$ are a regular sequence in $S[t_1,\dots,t_n]/(\Weil(\mathcal{F}^h))$ if and only if 
		\[
		\HS_{S[t_1,\dots,t_n]/((\Weil(\mathcal{F}^h))+(t_1,\ldots,t_n))}(z)=(1-z)^n\cdot\HS_{S[t_1,\dots,t_n]/(\Weil(\mathcal{F}^h))}(z).
		\]
		One also has
		\[
		\HS_{S[t_1,\dots,t_n]/((\Weil(\mathcal{F}^h))+(t_1,\dots,t_n))}(z) = \HS_{S[t]/(\Weil(\mathcal{F}^h\cup\{t\}))}(z)= \left(\HS_{R[t]/(\mathcal{F}^h\cup\{t\})}(z)\right)^n
		\]
		where we used Theorem~\ref{thm: Weilrestrictionproperties} for the second equality, and the fact that $\Weil(\mathcal{F}^h\cup\{t\})=\Weil(\mathcal{F}^h)\cup\{t_1,\dots,t_n\}$ for the first equality. Moreover
		\[
		\HS_{S[t_1,\dots,t_n]/(\Weil(\mathcal{F}^h))}(z) =\left(\HS_{R[t]/(\mathcal{F}^h)}(z)\right)^n.
		\]
		again by Theorem~\ref{thm: Weilrestrictionproperties}.
		It follows that $t$ is non zerodivisor modulo $(\mathcal{F}^h)$ if and only if $t_1,\dots,t_n$ are a regular sequence in $S[t_1,\dots,t_n]/(\Weil(\mathcal{F}^h))$.
	\end{proof}
	
	\begin{theorem}\label{thm:prosecco}
		Let $\mathcal{F}\subseteq R$ be a system of polynomials as in Notation~\ref{notation2} and assume that $t$ is non zerodivisor in $R[t]/(\mathcal{F}^h)$. Then
		\[
		\reg(\Weil(\mathcal{F})^h)= n\cdot\reg(\mathcal{F}^h)-n+1.
		\]
	\end{theorem}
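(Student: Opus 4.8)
The strategy is to realize $S[t]/(\Weil(\mathcal{F})^h)$ as the quotient of $S[t_1,\dots,t_n]/(\Weil(\mathcal{F}^h))$ by a regular sequence of linear forms, so that its Castelnuovo--Mumford regularity is unchanged, and then to apply Theorem~\ref{thm: Weilrestrictionproperties}\textit{(v)} to the \emph{homogeneous} ideal $(\mathcal{F}^h)\subseteq R[t]$.

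First I would turn Lemma~\ref{lemma Weil relation} into a statement about rings. Consider the graded $k$-algebra homomorphism
\[
\phi\colon S[t_1,\dots,t_n]\longrightarrow S[t],\qquad x_{i,j}\mapsto x_{i,j},\quad t_1\mapsto t,\quad t_j\mapsto 0\ \ (j=2,\dots,n).
\]
It is surjective, its kernel is $(t_2,\dots,t_n)$, and by Lemma~\ref{lemma Weil relation} it carries the generating system $\Weil(\mathcal{F}^h)$ onto the generating system $\Weil(\mathcal{F})^h$. Hence $\phi$ descends to a degree-preserving isomorphism
\[
A/(t_2,\dots,t_n)A\ \cong\ S[t]/(\Weil(\mathcal{F})^h),\qquad A:=S[t_1,\dots,t_n]/(\Weil(\mathcal{F}^h)).
\]

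Next I would use the hypothesis on $t$. By Lemma~\ref{lemma t1..tnregular}, $t_1,\dots,t_n$ is a regular sequence of linear forms on $A$. Since $A$ is a finitely generated $\mathbb{N}$-graded $k$-algebra and the $t_j$ are homogeneous of positive degree, a regular sequence of such elements can be permuted (this reduces to the classical fact for Noetherian local rings), so $t_2,\dots,t_n$ is again an $A$-regular sequence. I would then invoke the standard fact that quotienting by a linear non-zerodivisor leaves regularity unchanged; applying it $n-1$ times gives $\reg(A/(t_2,\dots,t_n)A)=\reg(A)$, so that, by the isomorphism above,
\[
\reg\bigl(S[t]/(\Weil(\mathcal{F})^h)\bigr)=\reg(A)=\reg\bigl(S[t_1,\dots,t_n]/(\Weil(\mathcal{F}^h))\bigr).
\]

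Finally I would apply Theorem~\ref{thm: Weilrestrictionproperties}\textit{(v)} to the homogeneous ideal $(\mathcal{F}^h)$ of the polynomial ring $R[t]=K[x_1,\dots,x_m,t]$, obtaining $\reg(\Weil(\mathcal{F}^h))=n\cdot\reg(\mathcal{F}^h)-n+1$. Rewriting all the regularities in terms of the associated quotient rings via $\reg(J)=\reg(\mathrm{ring}/J)+1$, this becomes $\reg\bigl(S[t_1,\dots,t_n]/(\Weil(\mathcal{F}^h))\bigr)=n\cdot\reg\bigl(R[t]/(\mathcal{F}^h)\bigr)$; combining with the displayed chain of equalities and translating back to ideals yields $\reg(\Weil(\mathcal{F})^h)=n\cdot\reg(\mathcal{F}^h)-n+1$. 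The points that require care are the verification that $\phi$ identifies the two ideals --- which is exactly Lemma~\ref{lemma Weil relation} together with $\ker\phi=(t_2,\dots,t_n)$ --- and the fact that it is $t_2,\dots,t_n$, rather than $t_1,\dots,t_{n-1}$, that must be a regular sequence on $A$, for which the permutability of regular sequences of homogeneous elements is used; everything else is bookkeeping with the identity $\reg(J)=\reg(\mathrm{ring}/J)+1$.
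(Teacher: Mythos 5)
Your proposal is correct and follows essentially the same route as the paper's proof: reduce modulo the linear forms $t_2,\dots,t_n$ via Lemma~\ref{lemma Weil relation}, use Lemma~\ref{lemma t1..tnregular} together with the invariance of regularity under quotienting by a linear non-zerodivisor, and finish with Theorem~\ref{thm: Weilrestrictionproperties}\textit{(v)} applied to $(\mathcal{F}^h)\subseteq R[t]$. Your explicit justification that $t_2,\dots,t_n$ is itself a regular sequence (via permutability of homogeneous regular sequences) is a detail the paper passes over with ``in particular,'' but otherwise the two arguments coincide.
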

	
	\begin{proof}
		First, we observe that, by Lemma~\ref{lemma t1..tnregular}, $t_1,\dots,t_n$ are a regular sequence in $S/(\Weil(\mathcal{F}^h))$, so in particular $t_2,\dots, t_n$ are.
		We recall also that by \cite[Corollary 4.13]{Eis05} if $M$ is a finitely generated module and $x$ is a linear form that is a non zerodivisor on $M$ then $\reg M = \reg M/xM$.
		Now, we can compute
		\[
		\begin{split}
			\reg\left(\Weil(\mathcal{F})^h\right) &= \reg\left(S[t]/(\Weil(\mathcal{F})^h)\right)+1 \\
			&= \reg\left(S[t_1,\dots,t_n]/((\Weil(\mathcal{F}^h))+(t_2,\dots,t_n))\right)+1\\
			&= \reg\left(S[t_1,\dots,t_n]/\Weil(\mathcal{F}^h)\right)+1\\
			&=n\cdot \reg\left(R[t]/(\mathcal{F}^h)\right)+1\\
			&=n\reg(\mathcal{F}^h)-n+1.
		\end{split}
		\]
		where the second equality follows from Lemma~\ref{lemma Weil relation}, the third one from 
		the observation above since $t_2,\dots, t_n$ are a regular sequence modulo $\Weil(\mathcal{F}^h)$, and the fourth equality follows from  Theorem~\ref{thm: Weilrestrictionproperties}.
	\end{proof}

	For a system $\mathcal{H}$ we denote by $\maxgb(\mathcal{H})$ the largest degree of an element in a reduced degree-reverse-lexicographic Gr\"obner basis of $\mathcal{H}$. 
	We recall that by \cite[Remark 7]{CG21} we have $$\maxgb(\mathcal{H})=\solvdeg(\mathcal{H})$$ for any homogeneous system $\mathcal{H}$.

	\begin{lemma}\label{lemma:sdhomog}
		Let $\mathcal{F}\subseteq R$ be a system of polynomials as in Notation~\ref{notation2}. Assume that $t\nmid 0$ in $R[t]/(\mathcal{F}^h)$. Then $$\solvdeg(\Weil(\mathcal{F}^h))=\solvdeg(\Weil(\mathcal{F})^h).$$
	\end{lemma}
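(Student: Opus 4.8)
The plan is to work entirely with initial ideals and to reduce the statement to a combinatorial feature of the degree reverse lexicographic order. Since $\mathcal{F}^h$ is homogeneous and Weil restriction of a homogeneous polynomial is homogeneous, both $\Weil(\mathcal{F}^h)$ and $\Weil(\mathcal{F})^h$ are homogeneous systems, so by \cite[Remark~7]{CG21} (recalled just above) it suffices to prove $\maxgb(\Weil(\mathcal{F}^h)) = \maxgb(\Weil(\mathcal{F})^h)$. I set $J = (\Weil(\mathcal{F}^h))\subseteq T := S[t_1,\dots,t_n]$ and equip $T$ with the degree reverse lexicographic order in which $t_1,\dots,t_n$ are the smallest variables, ordered $t_n < \dots < t_2 < t_1$, while on $S[t]$ the variable $t$ is the smallest; these two orders are then compatible under the substitution $t_1 = t$, $t_2 = \dots = t_n = 0$. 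By Lemma~\ref{lemma Weil relation}, this substitution identifies $(\Weil(\mathcal{F})^h)$ with the image $\bar J$ of $J$ in $\bar T := T/(t_2,\dots,t_n)$. Moreover, the hypothesis together with Lemma~\ref{lemma t1..tnregular} gives that $t_1,\dots,t_n$ is a regular sequence on $T/J$; since $T/J$ is nonnegatively graded and the $t_i$ have positive degree, any permutation of this sequence is again regular, so in particular $t_n, t_{n-1},\dots, t_2$ is a regular sequence on $T/J$.

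The technical core is the following property of the degree reverse lexicographic order: if $y$ is the smallest variable of a polynomial ring $T'$ and $J'\subseteq T'$ is homogeneous, then $\inid(J'+(y)) = \inid(J')+(y)$, and consequently, for the quotient map $\pi\colon T'\to T'/(y)$, one has $\inid(\pi(J')) = \pi(\inid(J'))$. The inclusion $\supseteq$ in the first identity is immediate; for $\subseteq$, given a homogeneous $f\in J'+(y)$ with $y\nmid\mathrm{LT}(f)$ one replaces $f$ by the sum $f_0$ of its terms not involving $y$ (this keeps $f_0\in J'+(y)$ and leaves $\mathrm{LT}(f)$ unchanged), writes $f_0 = g+yh$ with $g\in J'$ homogeneous, and uses that in degrevlex every monomial with $y$-exponent $0$ exceeds every monomial of the same degree with positive $y$-exponent; hence $\mathrm{LT}(g) = \mathrm{LT}(f_0) = \mathrm{LT}(f)\in\inid(J')$. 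The statement about $\pi$ follows from the same bookkeeping.

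With this available, I would successively pass to the quotients by $t_n, t_{n-1},\dots, t_2$, each being the smallest variable of the ring produced at the previous step. Iterating the property above yields $\inid(\bar J) = \pi(\inid(J))$ for the composite projection $\pi\colon T\to\bar T$. Combining the property with the Hilbert series criterion for regular sequences of linear forms already used in the proof of Lemma~\ref{lemma t1..tnregular} (namely \cite[Proposition~1]{Par10}), an induction then shows that $t_2,\dots,t_n$ is a regular sequence on $T/\inid(J)$ as well. Since $\inid(J)$ is a monomial ideal, this forces no minimal monomial generator of $\inid(J)$ to be divisible by any of $t_2,\dots,t_n$; hence every minimal generator of $\inid(J)$ involves only the variables $x_{i,j}$ and $t_1$, so $\pi$ carries it unchanged into $\bar T$ and $\inid(\bar J)=\pi(\inid(J))$ has exactly the same minimal generators as $\inid(J)$. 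Recalling that for a homogeneous ideal $\maxgb$ equals the largest degree of a minimal generator of the initial ideal (the reduced degrevlex Gröbner basis consists of homogeneous elements whose leading terms are precisely those generators), we conclude $\maxgb(J) = \maxgb(\bar J)$; concretely, setting $t_2 = \dots = t_n = 0$ in the elements of the reduced Gröbner basis of $J$ preserves leading terms and degrees and produces the reduced Gröbner basis of $\bar J$. This gives $\solvdeg(\Weil(\mathcal{F}^h)) = \maxgb(J) = \maxgb(\bar J) = \solvdeg(\Weil(\mathcal{F})^h)$.

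The main obstacle is the inductive step asserting that the regular sequence $t_2,\dots,t_n$ remains regular modulo $\inid(J)$: this is false for a general term order and genuinely relies on the fact that degrevlex commutes with modding out the smallest variable (the property in the second paragraph). It is precisely this that makes $\inid(J)$ decouple $t_2,\dots,t_n$ from the remaining variables, so that the Gröbner basis of $J$ restricts without change to that of $\bar J$; everything else is routine bookkeeping with initial ideals and Hilbert series.
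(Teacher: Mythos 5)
Your proposal is correct and follows essentially the same route as the paper's proof: both use Lemma~\ref{lemma t1..tnregular} to get that $t_2,\dots,t_n$ is a regular sequence modulo $\Weil(\mathcal{F}^h)$, transfer this regularity to the initial ideal via the degrevlex property for last variables, conclude that the minimal generators of $\inid(\Weil(\mathcal{F}^h))$ do not involve $t_2,\dots,t_n$, and then match Gr\"obner bases through the substitution of Lemma~\ref{lemma Weil relation}. The only difference is presentational: you work in the quotient ring $T/(t_2,\dots,t_n)$ and spell out the proof of the identity $\inid(J+(y))=\inid(J)+(y)$, whereas the paper adjoins $t_2,\dots,t_n$ to the system and invokes this standard fact implicitly.
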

	
	\begin{proof}
		Consider the degree reverse lexicographic order with $t_2,\ldots,t_n$ the last variables. By Lemma \ref{lemma t1..tnregular}, $t_2,\ldots,t_n$ is a regular sequence modulo $\Weil(\mathcal{F}^h)$. Therefore, $t_2,\ldots,t_n$ is a regular sequence modulo $\inid(\Weil(\mathcal{F}^h))$. In other words, the variables $t_2,\ldots,t_n$ do not appear in the monomial minimal generators of $\inid(\Weil(\mathcal{F}^h))$. It follows that, if $\mathcal{G}$ is a minimal Gr\"obner basis of $\Weil(\mathcal{F}^h)$, then $\mathcal{G}\cup\{t_2,\ldots,t_n\}$ is a minimal Gr\"obner basis of $\Weil(\mathcal{F}^h)\cup\{t_2,\ldots,t_n\}$. Since the system $\Weil(\mathcal{F})^h$ does not involve $t_2,\ldots,t_n$ and by \cite[Theorem~7]{CG21} and Lemma \ref{lemma Weil relation} we have
		\[
		\begin{split}
			\solvdeg(\Weil(\mathcal{F})^h) &= \maxgb(\Weil(\mathcal{F})^h)=
			\maxgb(\Weil(\mathcal{F})^h\cup\{t_2,\ldots,t_n\}) \\
			&= \maxgb(\Weil(\mathcal{F}^h)\cup\{t_2,\ldots,t_n\})
			= \maxgb(\Weil(\mathcal{F}^h)) \\ &= \solvdeg(\Weil(\mathcal{F}^h)).
		\end{split}
		\]
	\end{proof}
	
	\begin{theorem}\label{cor:solvdegWRinhomog}
		Let $\mathcal{F}\subseteq R$ be a system of polynomials as in Notation~\ref{notation2}. Assume that $t\nmid 0$ in $R[t]/(\mathcal{F}^h)$ and that $\mathcal{F}^h$ has finitely many projective solutions. Then 
		\[
		\solvdeg\left(\Weil(\mathcal{F})\right)\leq n\cdot\reg(\mathcal{F}^h)-n+1.
		\]
	\end{theorem}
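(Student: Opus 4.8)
The plan is to deduce the bound from the homogeneous case already settled, by passing from the affine system $\Weil(\mathcal{F})$ to its homogenization $\Weil(\mathcal{F})^h\subseteq S[t]$ and invoking the estimate that bounds the solving degree of an affine system by the regularity of its homogenization, \cite[Theorem~9]{CG21}. Concretely, I would establish
\[
\solvdeg\bigl(\Weil(\mathcal{F})\bigr)\ \le\ \reg\bigl(\Weil(\mathcal{F})^h\bigr)\ =\ n\cdot\reg(\mathcal{F}^h)-n+1,
\]
where the inequality is \cite[Theorem~9]{CG21} and the equality is Theorem~\ref{thm:prosecco}. For the inequality one must check that $\Weil(\mathcal{F})^h$ is in generic coordinates over $\overline{k}$ (and that it has finitely many projective solutions), and both facts will be read off from the standing hypotheses on $\mathcal{F}^h$ by means of Lemmas~\ref{lemma Weil relation} and~\ref{lemma t1..tnregular} together with Theorem~\ref{thm: Weilrestrictionproperties}(i).

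The verification proceeds as follows. Since $t$ is a non zerodivisor in $R[t]/(\mathcal{F}^h)$, Lemma~\ref{lemma t1..tnregular} makes $t_1,\dots,t_n$ a regular sequence in $S[t_1,\dots,t_n]/(\Weil(\mathcal{F}^h))$. By Lemma~\ref{lemma Weil relation} there is a degree-preserving isomorphism
\[
S[t]/(\Weil(\mathcal{F})^h)\ \cong\ S[t_1,\dots,t_n]/\bigl(\Weil(\mathcal{F}^h)+(t_2,\dots,t_n)\bigr)
\]
carrying the homogenizing variable $t$ to $t_1$. A permutation of a regular sequence of linear forms is again a regular sequence, so $t_1$ is a non zerodivisor modulo $\Weil(\mathcal{F}^h)+(t_2,\dots,t_n)$; that is, $t$ is a non zerodivisor modulo $(\Weil(\mathcal{F})^h)$, a property that can be read off the Hilbert series and hence persists after $-\otimes_k\overline{k}$, whence $t$ is a non zerodivisor modulo $\bigl((\Weil(\mathcal{F})^h)\otimes_k\overline{k}\bigr)^{\sat}$ as well. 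Moreover, killing the regular sequence $t_2,\dots,t_n$ lowers Krull dimension by exactly $n-1$, so Theorem~\ref{thm: Weilrestrictionproperties}(i) together with the assumption $\dim R[t]/(\mathcal{F}^h)\le 1$ yields
\[
\dim S[t]/(\Weil(\mathcal{F})^h)\ =\ n\cdot\dim R[t]/(\mathcal{F}^h)-(n-1)\ \le\ 1 .
\]
Thus $\Weil(\mathcal{F})^h$ has finitely many projective solutions, and since $t$ is the smallest variable of $S[t]$ in the degree reverse lexicographic order and is a non zerodivisor modulo the saturation, Definition~\ref{defn:gencoords} shows that $(\Weil(\mathcal{F})^h)\otimes_k\overline{k}$ is in generic coordinates.

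With these properties in hand, \cite[Theorem~9]{CG21} applied to $\Weil(\mathcal{F})$ gives $\solvdeg(\Weil(\mathcal{F}))\le\reg(\Weil(\mathcal{F})^h)$, and Theorem~\ref{thm:prosecco} rewrites the right-hand side as $n\cdot\reg(\mathcal{F}^h)-n+1$, completing the proof. I expect the only slightly delicate point to be the middle paragraph: one must correctly identify the homogenizing variable $t$ of $\Weil(\mathcal{F})$ with $t_1$ under Lemma~\ref{lemma Weil relation}, and genuinely exploit that $t_2,\dots,t_n$ form a \emph{regular} sequence (not merely a sequence of linear forms), so that the dimension drop is an equality and so that non zerodivisor-ness modulo the ideal upgrades to non zerodivisor-ness modulo its saturation. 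Granting this, the statement is a formal combination of Theorem~\ref{thm:prosecco} and \cite[Theorem~9]{CG21}; one could alternatively route the solving-degree comparison through Lemma~\ref{lemma:sdhomog}, but the direct passage to $\Weil(\mathcal{F})^h$ is the shortest.
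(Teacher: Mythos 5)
Your proposal is correct, but it routes the argument slightly differently from the paper, and the difference is worth recording. The paper's chain is $\solvdeg(\Weil(\mathcal{F}))\le\solvdeg(\Weil(\mathcal{F})^h)=\solvdeg(\Weil(\mathcal{F}^h))$, where the equality is Lemma~\ref{lemma:sdhomog}, and then it applies Bayer--Stillman to the $n$-dimensional ring $S[t_1,\dots,t_n]/(\Weil(\mathcal{F}^h))$ by exhibiting $t_n,\dots,t_1\in U_n(\Weil(\mathcal{F}^h))$, so that $\solvdeg(\Weil(\mathcal{F}^h))=\maxgb(\Weil(\mathcal{F}^h))\le\reg(\inid(\Weil(\mathcal{F}^h)))=\reg(\Weil(\mathcal{F}^h))=n\cdot\reg(\mathcal{F}^h)-n+1$. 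You instead apply the packaged bound \cite[Theorem~9]{CG21} directly to the affine system $\Weil(\mathcal{F})$ and its homogenization $\Weil(\mathcal{F})^h\subseteq S[t]$, which has Krull dimension $\le 1$, so that the generic-coordinates condition of Definition~\ref{defn:gencoords} collapses to the single requirement that the last variable $t$ be a non-zerodivisor modulo the saturation over $\overline{k}$. This buys you a genuine economy: you bypass Lemma~\ref{lemma:sdhomog} entirely (a nontrivial Gr\"obner-basis argument about the variables $t_2,\dots,t_n$ not appearing in the initial ideal), and you only need to control one variable rather than an $n$-tuple in $U_n$. The price is that you must verify genericity for $\Weil(\mathcal{F})^h$ from scratch, and your verification is sound: the identification of $t$ with $t_1$ via Lemma~\ref{lemma Weil relation}, the permutation of the homogeneous regular sequence from Lemma~\ref{lemma t1..tnregular}, the exact dimension drop giving $\dim S[t]/(\Weil(\mathcal{F})^h)\le 1$ via Theorem~\ref{thm: Weilrestrictionproperties}(i), and the persistence of regularity under $-\otimes_k\overline{k}$ read off the Hilbert series are all correct. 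Two small points you flag but should spell out: the implication from \enquote{non-zerodivisor mod $J$} to \enquote{non-zerodivisor mod $J^{\sat}$} (if $\mathfrak{m}^N tf\subseteq J$ then $t\cdot(\mathfrak{m}^N f)\subseteq J$, so $\mathfrak{m}^N f\subseteq J$, i.e.\ $f\in J^{\sat}$ --- note the paper needs the same fact implicitly to place $t_n,\dots,t_1$ in $U_n$), and the convention that the homogenizing variable $t$ is the smallest variable in the degree reverse lexicographic order on $S[t]$, which is what makes Definition~\ref{defn:gencoords} applicable to it. Both proofs terminate with Theorem~\ref{thm:prosecco}, which computes exactly the quantity $\reg(\Weil(\mathcal{F})^h)$ you need.
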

	
	\begin{proof}
		Since $\mathcal{F}^h$ has finitely many projective solutions and $t\nmid 0$ in $R[t]/(\mathcal{F}^h)$, then $R[t]/(\mathcal{F}^h)$ is Cohen-Macaulay of Krull dimension one. Therefore, $S[t_1,\ldots,t_n]/(\Weil(\mathcal{F}^h))$ is Cohen-Macaulay of Krull dimension $n$ by Theorem~\ref{thm: Weilrestrictionproperties}. Since $t\nmid 0$ modulo $\mathcal{F}^h$, then $t_n,\ldots,t_1$ is a regular sequence modulo $\Weil(\mathcal{F}^h)$ by Lemma~\ref{lemma t1..tnregular}.
		Therefore $t_n,\ldots,t_1\in U_n(\Weil(\mathcal{F}^h))$, where we follow the notation of Remark~\ref{remark:gencoords} (see also \cite[Definition 1.5]{BS87}). Hence 
		$$\reg(\Weil(\mathcal{F}^h))=\reg(\inid(\Weil(\mathcal{F}^h)))\geq \maxgb(\Weil(\mathcal{F}^h))=\solvdeg(\Weil(\mathcal{F}^h)),$$
		where the first equality follows from~\cite[Theorem 2.4]{BS87} and the second equality from~\cite[Remark~7]{CG21}.
		Moreover $$\solvdeg\left(\Weil(\mathcal{F})\right)\leq\solvdeg\left(\Weil(\mathcal{F})^h\right)= \solvdeg\left(\Weil(\mathcal{F}^h)\right),$$
		where the inequality follows from~\cite[Theorem~7]{CG21} and the equality from Lemma \ref{lemma:sdhomog}. 
		Summarizing
		$$\solvdeg\left(\Weil(\mathcal{F})\right)\leq\solvdeg\left(\Weil(\mathcal{F}^h)\right)\leq \reg\left(\Weil(\mathcal{F}^h)\right)=n\cdot\reg(\mathcal{F}^h)-n+1,$$
		where the last equality follows from Theorem~\ref{thm:prosecco}.
	\end{proof}
	
	In particular, we obtain the following result for a system defined over a finite field and which contains the field equations.
	
	\begin{definition}
		Let $\mathbb{F}_q$ be a finite field of cardinality $q$. The equations $x_1^q-x_1,\dots,x_m^q-x_m\in \mathbb{F}_q[x_1,\dots,x_m]$ are called the {\em field equations} of $\mathbb{F}_q$.
	\end{definition}

	\begin{corollary}\label{cor:gencoords}
		Let $\FF_q\hookrightarrow \FF_{q^n}$ be an extension of finite fields. Let $\mathcal{F}\subseteq \mathbb{F}_{q^n}[x_1,\dots,x_m]$ and assume that $\mathcal{F}$ contains the field equations of $\mathbb{F}_{q^n}$. Assume also that $t\nmid 0$ in $R[t]/(\mathcal{F}^h)$. Then
		\[
		\solvdeg\left(\Weil(\mathcal{F})\right)\leq n\cdot\reg(\mathcal{F}^h)-n+1.
		\]
	\end{corollary}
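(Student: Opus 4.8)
The plan is to deduce the statement directly from Theorem~\ref{cor:solvdegWRinhomog}. Its hypotheses are that $t\nmid 0$ in $R[t]/(\mathcal{F}^h)$ — which is assumed here — and that $\mathcal{F}^h$ has finitely many projective solutions. So the only point to check is that the presence of the field equations of $\mathbb{F}_{q^n}$ among the generators of $\mathcal{F}$ forces $\mathcal{F}^h$ to have finitely many projective solutions over $\overline{\mathbb{F}_q}$.

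First I would rule out solutions at infinity. For each $i=1,\dots,m$, the homogeneous component of highest degree of the field equation $x_i^{q^n}-x_i$ is $x_i^{q^n}$; hence $x_i^{q^n}\in(\mathcal{F}^{\ttop})$, equivalently $(\mathcal{F}^h)+(t)$ contains $x_1^{q^n},\dots,x_m^{q^n}$. A projective solution of $\mathcal{F}^h$ lying on the hyperplane $\{t=0\}\cong\mathbb{P}^{m-1}$ would therefore be a common zero of $x_1^{q^n},\dots,x_m^{q^n}$, i.e. a point with all of $x_1,\dots,x_m$ equal to zero; no such point exists in $\mathbb{P}^{m-1}$. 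Hence $\mathcal{F}^h$ has no solution at infinity, and every projective solution of $\mathcal{F}^h$ lies in the affine chart $\{t\neq 0\}$.

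Next I would bound the solutions in that chart. Setting $t=1$ identifies the projective solutions of $\mathcal{F}^h$ with $t\neq 0$ with the affine solutions of $\mathcal{F}$; but over $\overline{\mathbb{F}_q}=\overline{\mathbb{F}_{q^n}}$ the equations $x_i^{q^n}-x_i=0$ force each $x_i\in\mathbb{F}_{q^n}$, so $\mathcal{F}$ has at most $q^{nm}$ solutions. Therefore $\mathcal{F}^h$ has finitely many projective solutions, and Theorem~\ref{cor:solvdegWRinhomog} applies and yields $\solvdeg(\Weil(\mathcal{F}))\leq n\cdot\reg(\mathcal{F}^h)-n+1$, as claimed.

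I do not expect any real obstacle: the corollary is essentially a one-line consequence of Theorem~\ref{cor:solvdegWRinhomog} once the finiteness of the projective solution set of $\mathcal{F}^h$ is established, and that finiteness is immediate from the shape of the field equations. The only mild care needed is in the bookkeeping between $\mathcal{F}^h$, its dehomogenization, and the hyperplane at infinity, and in using that $\mathbb{F}_{q^n}$ and $\mathbb{F}_q$ share the same algebraic closure.
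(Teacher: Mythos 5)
Your proposal is correct and matches the paper's (implicit) argument: the paper states this corollary without proof as an immediate consequence of Theorem~\ref{cor:solvdegWRinhomog}, the only point to verify being that the field equations force $\mathcal{F}^h$ to have finitely many projective solutions, which you check exactly as intended (top-degree parts $x_i^{q^n}$ exclude points at infinity, and the field equations bound the affine solutions).
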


	Theorem \ref{cor:solvdegWRinhomog} allows us to extends the Macaulay bound to the Weil restriction of a complete intersection. 
	
	\begin{corollary}
		Let $\mathcal{F}\subseteq R$ be a system of polynomials as in Notation~\ref{notation2}.
		Suppose that $(\mathcal{F}^h)\subseteq R[t]$ is a complete intersection of degrees $d_1,\ldots,d_r$ in generic coordinates over $\overline{K}$, then $\Weil(\mathcal{F})^h$ is a complete intersection in generic coordinates over $\overline{k}$. In this case
		\[
		\solvdeg\left(\Weil(\mathcal{F})\right)\leq n\cdot\reg(\mathcal{F}^h)-n+1=n(d_1+\ldots +d_r)-nr+1.\]
	\end{corollary}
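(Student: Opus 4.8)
The plan is to derive the bound on $\solvdeg(\Weil(\mathcal{F}))$ from Theorem~\ref{cor:solvdegWRinhomog}, and the two complete intersection assertions from Theorem~\ref{thm: Weilrestrictionproperties}(iv) together with Lemmas~\ref{lemma Weil relation} and~\ref{lemma t1..tnregular}. First I would check the hypotheses of Theorem~\ref{cor:solvdegWRinhomog}. Since $(\mathcal{F}^h)\subseteq R[t]=K[x_1,\dots,x_m,t]$ is a complete intersection of $r$ forms in a polynomial ring in $m+1$ variables, it is Cohen--Macaulay of dimension $m+1-r$; the relevant case, in which $\mathcal{F}^h$ has finitely many projective solutions, is $r=m$, so that $R[t]/(\mathcal{F}^h)$ has Krull dimension one. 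As a positive-dimensional complete intersection, $(\mathcal{F}^h)$ is unmixed and the irrelevant maximal ideal is not among its associated primes, whence $(\mathcal{F}^h)^{\sat}=(\mathcal{F}^h)$; the generic coordinates hypothesis over $\overline{K}$ (Definition~\ref{defn:gencoords}, Remark~\ref{remark:gencoords}) then forces the last variable $t$ to be a non-zerodivisor modulo $(\mathcal{F}^h)$. With both hypotheses verified, Theorem~\ref{cor:solvdegWRinhomog} gives $\solvdeg(\Weil(\mathcal{F}))\leq n\cdot\reg(\mathcal{F}^h)-n+1$. Finally, since $(\mathcal{F}^h)$ is a complete intersection of forms of degrees $d_1,\dots,d_r$, its Koszul complex is a minimal free resolution, so $\reg(R[t]/(\mathcal{F}^h))=\sum_{i=1}^r(d_i-1)$ and $\reg(\mathcal{F}^h)=d_1+\dots+d_r-r+1$; substituting yields $n\cdot\reg(\mathcal{F}^h)-n+1=n(d_1+\dots+d_r)-nr+1$.

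Next I would show that $\Weil(\mathcal{F})^h$ is a complete intersection. By Theorem~\ref{thm: Weilrestrictionproperties}(iv), $S[t_1,\dots,t_n]/\Weil(\mathcal{F}^h)$ is a complete intersection, and by Lemma~\ref{lemma t1..tnregular}, since $t$ is a non-zerodivisor modulo $(\mathcal{F}^h)$, the variables $t_1,\dots,t_n$ form a regular sequence modulo $\Weil(\mathcal{F}^h)$; in particular so do $t_2,\dots,t_n$. By Lemma~\ref{lemma Weil relation}, the substitution $t_1=t$, $t_2=\dots=t_n=0$ identifies $S[t]/(\Weil(\mathcal{F})^h)$ with $S[t_1,\dots,t_n]/(\Weil(\mathcal{F}^h)+(t_2,\dots,t_n))$. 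Quotienting a complete intersection by a regular sequence of linear forms again yields a complete intersection (after a linear change of coordinates turning those forms into variables one gets the same number of generators, and a dimension count shows the codimension is unchanged), so $\Weil(\mathcal{F})^h$ is a complete intersection.

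For the generic coordinates assertion, note that $S[t]/(\Weil(\mathcal{F})^h)$ is then Cohen--Macaulay, of dimension $nm-nr+1=1$ in the case $r=m$. A one-dimensional complete intersection is unmixed and not primary to the irrelevant ideal, hence equals its own saturation, so by Definition~\ref{defn:gencoords} it suffices to produce one variable that is a non-zerodivisor modulo $(\Weil(\mathcal{F})^h)$. Since $t_1,\dots,t_n$ is a regular sequence in the graded Cohen--Macaulay ring $S[t_1,\dots,t_n]/\Weil(\mathcal{F}^h)$, any permutation of it is again a regular sequence; therefore $t=t_1$ is a non-zerodivisor modulo $\Weil(\mathcal{F}^h)+(t_2,\dots,t_n)=(\Weil(\mathcal{F})^h)$, and $\Weil(\mathcal{F})^h$ is in generic coordinates over $\overline{k}$.

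The work here is bookkeeping rather than conceptual, but there are two delicate points. The first is that the generic-coordinates hypothesis over $\overline{K}$ is invoked only to extract the single fact ``$t$ is a non-zerodivisor modulo $(\mathcal{F}^h)$'', which is exactly what Theorem~\ref{cor:solvdegWRinhomog} requires. The second is the passage from a regular sequence modulo $\Weil(\mathcal{F}^h)$ to one modulo $\Weil(\mathcal{F})^h$, which relies on reorderability of regular sequences of positive-degree forms in a graded Cohen--Macaulay ring. (When $r<m$ the generic-coordinates statement can be obtained by the same circle of ideas, Weil-restricting a generic regular sequence of variables modulo $(\mathcal{F}^h)$ and using the Hilbert-series criterion of \cite{Par10} as in Lemma~\ref{lemma t1..tnregular}; the solving-degree bound, however, genuinely uses $r=m$ through Theorem~\ref{cor:solvdegWRinhomog}.)
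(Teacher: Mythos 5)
Your treatment of the complete-intersection and generic-coordinates claims follows the paper's strategy (Lemma~\ref{lemma t1..tnregular} to promote $t$ to the regular sequence $t_n,\ldots,t_1$ modulo $\Weil(\mathcal{F}^h)$, Theorem~\ref{thm: Weilrestrictionproperties}(iv) for the complete intersection property, Lemma~\ref{lemma Weil relation} to pass to $\Weil(\mathcal{F})^h$). The gap is in how you obtain the solving-degree bound. You route it exclusively through Theorem~\ref{cor:solvdegWRinhomog}, whose hypothesis that $\mathcal{F}^h$ has finitely many projective solutions forces $r=m$, and you concede that your derivation ``genuinely uses $r=m$''; but the corollary is stated for a complete intersection of arbitrary codimension $r$. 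The missing idea is that the generic-coordinates property of $\Weil(\mathcal{F})^h$ --- which is precisely what the corollary goes to the trouble of establishing --- already yields the bound with no finiteness assumption: as in the proof of Theorem~\ref{cor:solvdegWRinhomog}, one has $\solvdeg(\Weil(\mathcal{F}))\leq\solvdeg(\Weil(\mathcal{F})^h)=\maxgb(\Weil(\mathcal{F})^h)\leq\reg(\Weil(\mathcal{F})^h)$ by \cite[Theorem~7 and Remark~7]{CG21} and \cite[Theorem~2.4]{BS87}, and Theorem~\ref{thm:prosecco} converts the right-hand side into $n\cdot\reg(\mathcal{F}^h)-n+1$. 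Your Koszul computation $\reg(\mathcal{F}^h)=d_1+\cdots+d_r-r+1$ then closes the argument for every $r$.

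Relatedly, your verification that $\Weil(\mathcal{F})^h$ is in generic coordinates is complete only for $r=m$, where a single non-zerodivisor suffices; for $r<m$ you only gesture at ``the same circle of ideas''. The paper's proof supplies the detail you would need: generic coordinates over $\overline{K}$ gives the full regular sequence $t,x_m,\ldots,x_{r+1}$ modulo $(\mathcal{F}^h)$ (a positive-dimensional complete intersection stays unmixed and saturated after quotienting by a linear non-zerodivisor, so the saturations in Definition~\ref{defn:gencoords} are harmless), and the Hilbert-series argument of Lemma~\ref{lemma t1..tnregular}, applied to each of these variables and not only to $t$, promotes it to the regular sequence $t_n,\ldots,t_1,x_{m,n},\ldots,x_{r+1,1}$ modulo $\Weil(\mathcal{F}^h)$, which is exactly the generic-coordinates condition for $\Weil(\mathcal{F}^h)$ and hence for $\Weil(\mathcal{F})^h$ after killing $t_2,\ldots,t_n$. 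Your reordering argument showing that $t_1$ remains a non-zerodivisor is fine; it is the regular sequence of $x$-variables that you need for $r<m$ and do not supply.
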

	
	\begin{proof}
		
		If $(\mathcal{F}^h)\subseteq R[t]$ is a complete intersection in generic coordinates over $\overline{K}$, then $t,x_m,\ldots,x_{r+1}$ are a regular sequence modulo $(\mathcal{F}^h)$. By Lemma \ref{lemma t1..tnregular}, $t_n,\ldots,t_1,x_{m,n},\ldots,x_{r+1,1}$ are a regular sequence modulo $\Weil(\mathcal{F}^h)$. Moreover, $\Weil(\mathcal{F}^h)$ is a complete intersection of codimension $nr$ by Theorem \ref{thm: Weilrestrictionproperties} and it is in generic coordinates over $\overline{k}$. By Lemma \ref{lemma t1..tnregular} the same holds for $\Weil(\mathcal{F})^h$.
	\end{proof}
	
	Suppose now that $\mathcal{F}$ contains the fields equations of $\FF_{q^n}$. It is easy to check that $\Weil(\mathcal{F})$ also contains the field equations of $\mathbb{F}_{q^n}$. In practice, in order to solve the system $\Weil(\mathcal{F})$ over $\FF_q$, it makes sense to add the field equations of $\mathbb{F}_{q}$ to it. The next proposition allows us to bound the solving degree of the system that we obtain in this way. The explicit bound is given in Corollary \ref{cor:wfieldeqns}.
	
	\begin{proposition}\label{prop:fieldeqn}
		Let $\FF_q\hookrightarrow \FF_{q^n}$ be an extension of finite fields. Let $\mathcal{F}\subseteq  \FF_{q}[x_1,\dots,x_m]$ be a system of polynomials which contains the field equations of $\FF_{q^n}$. Then
		\[
		\solvdeg\left(\mathcal{F}\cup\{x_i^q-x_i\mid i=1,\ldots,m\}\right)\leq \solvdeg(\mathcal{F}^h).
		\]
	\end{proposition}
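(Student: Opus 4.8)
The plan is to pass to the homogeneous setting, where the solving degree coincides with the top Gr\"obner basis degree, and then to use the divisibility $x_i^q-x_i\mid x_i^{q^n}-x_i$ over $\FF_q$, which holds because $\FF_q\subseteq\FF_{q^n}$.

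Set $\mathcal{G}=\mathcal{F}\cup\{x_i^q-x_i\mid i=1,\ldots,m\}$. Homogenization with respect to a new variable $t$ (chosen smallest in the degree reverse lexicographic order) commutes with taking unions of systems, so $\mathcal{G}^h=\mathcal{F}^h\cup\{x_i^q-x_it^{q-1}\mid i\}$. By \cite[Theorem~7]{CG21} one has $\solvdeg(\mathcal{G})\le\solvdeg(\mathcal{G}^h)$, and since both $\mathcal{G}^h$ and $\mathcal{F}^h$ are homogeneous, \cite[Remark~7]{CG21} gives $\solvdeg(\mathcal{G}^h)=\maxgb(\mathcal{G}^h)$ and $\solvdeg(\mathcal{F}^h)=\maxgb(\mathcal{F}^h)$. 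Thus it is enough to prove the purely commutative-algebra inequality $\maxgb(\mathcal{G}^h)\le\maxgb(\mathcal{F}^h)$.

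For the structural input, since $\FF_q\subseteq\FF_{q^n}$ the polynomial $x^q-x$ divides $x^{q^n}-x$ in $\FF_q[x]$, and homogenizing gives $x_i^{q^n}-x_it^{q^n-1}=(x_i^q-x_it^{q-1})\,h_i$, where $h_i=\prod_{\alpha\in\FF_{q^n}\setminus\FF_q}(x_i-\alpha t)$ is homogeneous of degree $q^n-q$ and monic in $x_i$, so $\inid(h_i)=x_i^{q^n-q}$ and no leading-term cancellation occurs: $\inid(x_i^{q^n}-x_it^{q^n-1})=x_i^{q^n}=\inid(x_i^q-x_it^{q-1})\cdot\inid(h_i)$. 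Because $\mathcal{F}$ contains the field equations of $\FF_{q^n}$, each $x_i^{q^n}-x_it^{q^n-1}$ lies in $(\mathcal{F}^h)$, so $(\mathcal{G}^h)=(\mathcal{F}^h)+(x_i^q-x_it^{q-1}\mid i)$ is obtained from $(\mathcal{F}^h)$ by adjoining divisors of generators that are already present, the polynomials $x_i^{q^n}-x_it^{q^n-1}$ becoming redundant generators of $(\mathcal{G}^h)$. (Equivalently, expanding $h_i$ into monomials shows that in every Macaulay matrix $M_{\le d}(\mathcal{G}^h)$ the rows indexed by multiples of $x_i^{q^n}-x_it^{q^n-1}$ lie in the span of the rows indexed by multiples of $x_i^q-x_it^{q-1}$.)

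The heart of the proof is to show that this adjunction does not increase the top Gr\"obner basis degree. Let $D=\maxgb(\mathcal{F}^h)$ and let $\mathcal{B}$ be the reduced degree reverse lexicographic Gr\"obner basis of $(\mathcal{F}^h)$, of degree at most $D$; I would run Buchberger's algorithm on $\mathcal{B}\cup\{x_i^q-x_it^{q-1}\mid i\}$. The S-polynomials among the $x_i^q-x_it^{q-1}$ reduce to zero by the product criterion, since their leading terms $x_i^q$ are pairwise coprime. For an S-polynomial of $x_i^q-x_it^{q-1}$ with $b\in\mathcal{B}$, multiplying by the suitable power of $x_i$ brings the leading term $x_i^{q^n}$ of $x_i^{q^n}-x_it^{q^n-1}=(x_i^q-x_it^{q-1})h_i$ into play; using $x_i^{q^n}-x_it^{q^n-1}\in(\mathcal{F}^h)$ and the absence of leading-term cancellation recorded above, one rewrites this S-polynomial modulo $\mathcal{B}$ in terms of S-polynomials already handled in the Gr\"obner computation of $(\mathcal{F}^h)$; as the latter reduce to zero within degree $D$, so does the former, producing at worst new basis elements of degree at most $D$. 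Hence $\maxgb(\mathcal{G}^h)\le D$, and together with the reduction above, $\solvdeg(\mathcal{G})\le\solvdeg(\mathcal{F}^h)$. This last step --- controlling the degrees created when the $x_i^q-x_it^{q-1}$ are adjoined --- is the main obstacle, and it is exactly where the hypothesis that $\mathcal{F}$ contains the field equations of $\FF_{q^n}$ is used essentially, through the factorization $x_i^{q^n}-x_it^{q^n-1}=(x_i^q-x_it^{q-1})h_i$ with compatible leading terms; without such an assumption the inequality should not be expected to hold.
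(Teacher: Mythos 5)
Your overall route is the same as the paper's: you homogenize, observe that $(\mathcal{F}\cup\{x_i^q-x_i\mid i\})^h=\mathcal{F}^h\cup\{x_i^q-x_it^{q-1}\mid i\}$, pass to $\maxgb$ via \cite[Theorem~7]{CG21} and \cite[Remark~7]{CG21}, and reduce everything to the single inequality $\maxgb(\mathcal{F}^h\cup\{x_i^q-x_it^{q-1}\})\leq\maxgb(\mathcal{F}^h)$, extracted from the divisibility $x_i^{q}-x_it^{q-1}\mid x_i^{q^n}-x_it^{q^n-1}$ together with the fact that $\mathcal{F}^h$ contains the homogenized field equations of $\FF_{q^n}$. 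This is exactly the paper's reduction and the paper's crux; the paper deduces the $\maxgb$ inequality directly from the divisibility and stops there. Where you go beyond the paper is in trying to substantiate that inequality with a Buchberger computation, and that is where your argument has a genuine gap.

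Concretely, write $g_i=x_i^q-x_it^{q-1}$ and $h_i=x_i^{q^n}-x_it^{q^n-1}=g_iu_i$. Your treatment of the pairs $(g_i,b)$ with $b\in\mathcal{B}$ does not go through as stated: multiplying $S(g_i,b)$ by $x_i^{q^n-q}$ does not yield $S(h_i,b)$, because $u_i$ is not the monomial $x_i^{q^n-q}$ but only has it as leading term, and the discrepancy coming from the lower-order terms of $u_i$ is not absorbed by ``rewriting modulo $\mathcal{B}$.'' More importantly, even with an exact relation in hand, knowing that a monomial multiple $x_i^N\cdot S(g_i,b)$ reduces to zero within degree $D$ says nothing about the normal form of $S(g_i,b)$ itself: reduction to zero of $x_i^Np$ does not imply reduction to zero of $p$, nor does it bound the degree at which $p$ contributes a new leading term. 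Finally, the conclusion ``producing at worst new basis elements of degree at most $D$'' is in tension with the claim that the S-polynomials reduce to zero, and if new basis elements do appear, their S-pairs must in turn be processed; your argument does not control this cascade, which is the whole difficulty in bounding $\maxgb$ under the adjunction of new generators. So you have correctly identified both the reduction and the key claim (the same ones the paper uses), but the justification you offer for the key claim does not hold up as written.
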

	
	\begin{proof}
		The homogenized system $\mathcal{F}^h$ contains the homogenizations of the field equations of $\FF_{q^n}$, namely $x_i^{q^n}-x_it^{q^n-1}$ for all $i=1,\ldots,m$.
		Since $x_i^{q}-x_it^{q-1}\mid x_i^{q^n}-x_it^{q^n-1}$, then 
		$$\maxgb(\mathcal{F}^h\cup\{x_i^q-x_it^{q-1}\})\leq \maxgb(\mathcal{F}^h).$$ 
		Since for a homogeneous system $\mathcal{H}$ we know that $\maxgb(\mathcal{H})=\solvdeg(\mathcal{H})$ by \cite[Remark 7]{CG21}, we obtain that 
		$$\solvdeg(\mathcal{F}^h\cup\{x_i^q-x_i t^{q-1}\mid i=1,
		\ldots,m\})\leq\solvdeg(\mathcal{F}^h),$$
		which concludes the proof.
	\end{proof}
	
	\begin{corollary}\label{cor:wfieldeqns}
		Let $\FF_q\hookrightarrow \FF_{q^n}$ be an extension of finite fields. Let $\mathcal{F}\subseteq \FF_{q^n}[x_1,\dots,x_m]$ be a system of polynomials which contains the field equations of $\FF_{q^n}$. Assume that $t\nmid 0$ modulo $\mathcal{F}^h$. Then
		\[
		\solvdeg\left(\Weil(\mathcal{F})\cup\{x_{i,j}^q-x_{i,j}\}\right)\leq n\cdot\reg(\mathcal{F}^h)-n+1.
		\]
	\end{corollary}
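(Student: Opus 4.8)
The plan is to obtain the bound by applying Proposition~\ref{prop:fieldeqn} to the Weil restriction system $\Weil(\mathcal{F})$, and then reusing the chain of inequalities already established in the proof of Theorem~\ref{cor:solvdegWRinhomog}. First I would verify that Proposition~\ref{prop:fieldeqn} is applicable to $\Weil(\mathcal{F})$, i.e.\ that $\Weil(\mathcal{F})\subseteq\FF_q[x_{i,j}]$ contains the field equations of $\FF_{q^n}$. This is the observation recalled just before Proposition~\ref{prop:fieldeqn}: with $\psi(x_i)=x_{i,1}\alpha_1+\cdots+x_{i,n}\alpha_n$ and $\alpha_j^{q^n}=\alpha_j$ (because $\alpha_j\in\FF_{q^n}$), one gets $\psi(x_i)^{q^n}-\psi(x_i)=\sum_{j=1}^n(x_{i,j}^{q^n}-x_{i,j})\alpha_j$, so that $\Weil(x_i^{q^n}-x_i)=\{x_{i,j}^{q^n}-x_{i,j}\mid j=1,\dots,n\}$; since $x_i^{q^n}-x_i\in\mathcal{F}$ for every $i$ by hypothesis, the system $\Weil(\mathcal{F})$ contains all of the $x_{i,j}^{q^n}-x_{i,j}$.

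Applying Proposition~\ref{prop:fieldeqn} with $\Weil(\mathcal{F})$ in the role of $\mathcal{F}$ then gives
\[
\solvdeg\!\left(\Weil(\mathcal{F})\cup\{x_{i,j}^q-x_{i,j}\}\right)\leq\solvdeg\!\left(\Weil(\mathcal{F})^h\right),
\]
and it remains to bound $\solvdeg(\Weil(\mathcal{F})^h)$. Here I would check that we are in the situation of Theorem~\ref{cor:solvdegWRinhomog}: the hypothesis $t\nmid 0$ in $R[t]/(\mathcal{F}^h)$ is assumed, and $\mathcal{F}^h$ automatically has finitely many projective solutions, because it contains the homogenized field equations $x_i^{q^n}-x_it^{q^n-1}$, whose common projective zero set is the finite set of points of $\mathbb{P}^m$ with $t\neq 0$ and coordinates in $\FF_{q^n}$ (and there is no such zero with $t=0$). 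Then the argument in the proof of Theorem~\ref{cor:solvdegWRinhomog} applies verbatim: $R[t]/(\mathcal{F}^h)$ is Cohen--Macaulay of Krull dimension one, hence $S[t_1,\dots,t_n]/(\Weil(\mathcal{F}^h))$ is Cohen--Macaulay of Krull dimension $n$ by Theorem~\ref{thm: Weilrestrictionproperties}; by Lemma~\ref{lemma t1..tnregular} the sequence $t_n,\dots,t_1$ is regular modulo $\Weil(\mathcal{F}^h)$, so $\Weil(\mathcal{F}^h)$ is in generic coordinates over $\overline{k}$; hence $\solvdeg(\Weil(\mathcal{F}^h))\leq\reg(\Weil(\mathcal{F}^h))=n\cdot\reg(\mathcal{F}^h)-n+1$ by \cite{BS87}, \cite[Remark~7]{CG21} and Theorem~\ref{thm: Weilrestrictionproperties}. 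Finally Lemma~\ref{lemma:sdhomog} (valid since $t\nmid 0$ in $R[t]/(\mathcal{F}^h)$) gives $\solvdeg(\Weil(\mathcal{F})^h)=\solvdeg(\Weil(\mathcal{F}^h))$, and concatenating the two displays yields the claim.

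I do not expect a genuine obstacle: all the real work is already contained in Proposition~\ref{prop:fieldeqn}, Lemma~\ref{lemma:sdhomog} and Theorem~\ref{cor:solvdegWRinhomog}. The only two points that need a line or two of justification are that $\Weil(\mathcal{F})$ actually contains the field equations of $\FF_{q^n}$ (so that the divisibility $x_{i,j}^q-x_{i,j}t^{q-1}\mid x_{i,j}^{q^n}-x_{i,j}t^{q^n-1}$ underlying the proof of Proposition~\ref{prop:fieldeqn} is available), and that the extra hypothesis of Theorem~\ref{cor:solvdegWRinhomog} that $\mathcal{F}^h$ has finitely many projective solutions comes for free here, since $\mathcal{F}$ contains the field equations of $\FF_{q^n}$. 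Neither requires any real computation.
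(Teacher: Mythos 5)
Your proposal is correct and follows essentially the same route as the paper: Proposition~\ref{prop:fieldeqn} applied to $\Weil(\mathcal{F})$, then Lemma~\ref{lemma:sdhomog}, then the bound $\solvdeg(\Weil(\mathcal{F}^h))\leq\reg(\Weil(\mathcal{F}^h))=n\cdot\reg(\mathcal{F}^h)-n+1$. The only (harmless) difference is in the last step, where the paper cites Corollary~\ref{cor: solvingdegreeweilrestriction} together with \cite[Theorem~11]{CG21}, while you rerun the generic-coordinates argument from the proof of Theorem~\ref{cor:solvdegWRinhomog} after checking that its hypothesis of finitely many projective solutions is automatic in the presence of the field equations --- a slightly more self-contained justification of the same inequality.
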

	
	\begin{proof}
		
		Combining Proposition \ref{prop:fieldeqn}, Lemma \ref{lemma:sdhomog}, Corollary \ref{cor: solvingdegreeweilrestriction}, and \cite[Theorem 11]{CG21} we obtain
		\[
		\begin{split}
			\solvdeg\left(\Weil(\mathcal{F})\cup\{x_{i,j}^q-x_{i,j}\}\right)&\leq\solvdeg\left(\Weil(\mathcal{F})^h\right)\\& = \solvdeg\left(\Weil(\mathcal{F}^h)\right)\\&\leq n\cdot\reg(\mathcal{F}^h)-n+1.
		\end{split}
		\]
	\end{proof}
	
	Similarly to the solving degree, one can relate the degree of regularity of a system to that of its Weil restriction.
	
	\begin{proposition}\label{prop:dreg_inhomog}
		Let $\mathcal{F}\subseteq R$ be a system of polynomials. Then $$\Weil(\mathcal{F}^{\ttop})=\Weil(\mathcal{F})^{\ttop}.$$
		Moreover, if $(\mathcal{F}^{\ttop})_d=R_d$ for $d\gg 0$, then
		\[
		d_{\reg}\left(\Weil(\mathcal{F})\right)=n\cdot d_{\reg}(\mathcal{F})-n+1.
		\]
	\end{proposition}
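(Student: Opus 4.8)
The plan is to prove first the equality of systems $\Weil(\mathcal{F}^{\ttop})=\Weil(\mathcal{F})^{\ttop}$, and then to deduce the formula for $d_{\reg}$ by reducing to the homogeneous case treated in Corollary~\ref{cor: solvingdegreeweilrestriction}. The crux of the first part is the following assertion, which I would prove directly: if $p\in R$ is a nonzero homogeneous polynomial of degree $d$, then every element of $\Weil(p)$ is nonzero and homogeneous of degree $d$.

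To establish this, I would start from the fact that $\psi\colon R\to S\otimes_kK$ sends each variable to a linear form, hence is a graded $K$-algebra homomorphism. Then $\psi(p)$ is homogeneous of degree $d$, and writing $\psi(p)=\sum_{j=1}^n h_j\alpha_j$ with $h_j\in S$ forces each $h_j$ to be homogeneous of degree $d$ by $k$-linear independence of the $\alpha_j$; so the only thing to check is that no $h_j$ vanishes. For this I would let $\sigma\in G=\mathrm{Gal}(K/k)$ act on $\psi(p)=\sum_j h_j\alpha_j$ by fixing $S$ and conjugating the coefficients in $K$, obtaining $\sum_j h_j\,\sigma(\alpha_j)=\sigma(\psi(p))=p^\sigma\bigl(\psi^{(\sigma)}(x_1),\dots,\psi^{(\sigma)}(x_m)\bigr)$, where $\psi^{(\sigma)}(x_i)=\sum_j\sigma(\alpha_j)x_{i,j}$ and $p^\sigma$ is $p$ with coefficients conjugated by $\sigma$. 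Since the matrix $\bigl(\sigma(\alpha_j)\bigr)_{\sigma\in G,\,j}$ is invertible (its invertibility is established in the proof of Theorem~\ref{thm:weil}), one may solve for the $h_j$ and write each as a $K$-linear combination $h_j=\sum_{\sigma\in G}c_{j,\sigma}\,p^\sigma\bigl(\psi^{(\sigma)}(x_1),\dots,\psi^{(\sigma)}(x_m)\bigr)$ with the coefficients $c_{j,\sigma}$ not all zero. The point is then that the $mn$ linear forms $\{\psi^{(\sigma)}(x_i)\}_{i,\sigma}$ form a system of coordinates on $S\otimes_kK$ — the matrix expressing them in terms of the $x_{i,j}$ is block diagonal with each block equal to $\bigl(\sigma(\alpha_j)\bigr)_{\sigma,j}$ — and that for distinct $\sigma$ the summands $p^\sigma\bigl(\psi^{(\sigma)}(x_1),\dots,\psi^{(\sigma)}(x_m)\bigr)$ involve disjoint sets of these coordinates. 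Hence no cancellation between them occurs, so $h_j=0$ would force $c_{j,\sigma}\,p^\sigma=0$ for all $\sigma$, hence $p^\sigma=0$ for some $\sigma$, hence $p=0$ — a contradiction. Ruling out the vanishing of the individual components of a Weil restriction in this way is the step I expect to require the most care; everything else is bookkeeping with the definitions.

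Granting the assertion, I would argue as follows. Let $f\in\mathcal{F}$ have degree $d$ and write $\Weil(f)=\{g_1,\dots,g_n\}$, so $\psi(f)=\sum_j g_j\alpha_j$. Because $\psi$ is graded and does not raise degree, the degree-$d$ homogeneous component of $\psi(f)$ is $\psi(f^{\ttop})$, and comparing degree-$d$ components in $\psi(f)=\sum_j g_j\alpha_j$ gives $\Weil(f^{\ttop})=\{(g_1)_d,\dots,(g_n)_d\}$, where $(g_j)_d$ denotes the degree-$d$ homogeneous part of $g_j$. The assertion applied to $p=f^{\ttop}$ shows each $(g_j)_d\neq 0$, so $\deg g_j\ge d$; since also $\deg g_j\le\deg\psi(f)\le d$, we get $\deg g_j=d$ and $g_j^{\ttop}=(g_j)_d$. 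Thus $\Weil(f)^{\ttop}=\Weil(f^{\ttop})$, and taking the union over $f\in\mathcal{F}$ yields $\Weil(\mathcal{F})^{\ttop}=\Weil(\mathcal{F}^{\ttop})$.

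For the last identity, I would first note that $d_{\reg}(\mathcal{F})=d_{\reg}(\mathcal{F}^{\ttop})$: the system $\mathcal{F}^{\ttop}$ is homogeneous, so $(\mathcal{F}^{\ttop})^{\ttop}=\mathcal{F}^{\ttop}$, and both degrees of regularity are read off the ideal $(\mathcal{F}^{\ttop})$. By the part just proved and Definition~\ref{def:Weilrestriction}, the homogeneous ideal generated by $\Weil(\mathcal{F})^{\ttop}$ is $\Weil\bigl((\mathcal{F}^{\ttop})\bigr)$. The hypothesis $(\mathcal{F}^{\ttop})_d=R_d$ for $d\gg0$ means $\dim R/(\mathcal{F}^{\ttop})=0$, so Theorem~\ref{thm: Weilrestrictionproperties}\,\textit{(i)} gives $\dim S/\Weil((\mathcal{F}^{\ttop}))=0$; hence $d_{\reg}(\Weil(\mathcal{F}))$ is well defined and, since $\Weil(\mathcal{F}^{\ttop})$ is homogeneous, equals $d_{\reg}(\Weil(\mathcal{F}^{\ttop}))$. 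Finally, I would apply Corollary~\ref{cor: solvingdegreeweilrestriction}\,\textit{(2)} to the homogeneous system $\mathcal{F}^{\ttop}$ — whose hypothesis is precisely $(\mathcal{F}^{\ttop})_d=R_d$ for $d\gg0$ — to obtain $d_{\reg}(\Weil(\mathcal{F}^{\ttop}))=n\cdot d_{\reg}(\mathcal{F}^{\ttop})-n+1$; combining this with the previous observations and with $d_{\reg}(\mathcal{F})=d_{\reg}(\mathcal{F}^{\ttop})$ gives $d_{\reg}(\Weil(\mathcal{F}))=n\cdot d_{\reg}(\mathcal{F})-n+1$.
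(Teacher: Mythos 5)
Your proof is correct and follows the same route as the paper: first establish the set equality $\Weil(\mathcal{F}^{\ttop})=\Weil(\mathcal{F})^{\ttop}$, then reduce to Corollary~\ref{cor: solvingdegreeweilrestriction}(2) via Theorem~\ref{thm: Weilrestrictionproperties}(i) and the observation that $d_{\reg}$ of a system is by definition $d_{\reg}$ of its top-degree part. The one substantive addition is your non-vanishing lemma for the components of $\Weil(p)$ when $p$ is homogeneous of positive degree, which rules out a degree drop in some $g_j$ (without it, $g_j^{\ttop}$ could differ from $(g_j)_d$ and the set equality could fail); the paper dismisses this step in one sentence (``substitution commutes with taking the top degree part''), so your Galois-conjugation and disjoint-coordinate-blocks argument is a worthwhile elaboration of a point the paper glosses over, not a different proof.
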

	
	\begin{proof}
		The equality $\Weil(\mathcal{F}^{\ttop})=\Weil(\mathcal{F})^{\ttop}$ follows from observing that substituting the homogeneous linear forms (\ref{eq:changevariables}) commutes with taking the top degree part of the polynomial. Since $(\mathcal{F}^{\ttop})_d=R_d$ for $d\gg 0$, then $\Weil(\mathcal{F}^{\ttop})_d=\Weil(\mathcal{F})^{\ttop}=S_d$ for $d\gg 0$ by Theorem \ref{thm: Weilrestrictionproperties}. Now the relation between the degree of regularity of $\mathcal{F}$ and that of its Weil restriction follows from Corollary \ref{cor: solvingdegreeweilrestriction}.
	\end{proof}

\end{document}